%% file: main.tex
\documentclass[12pt]{article}

\usepackage[slantedGreek]{mathpazo}
\usepackage{microtype}
\usepackage{graphicx}
\usepackage{amsfonts}
\usepackage{setspace}
\usepackage{amssymb}
\usepackage{graphicx}
\usepackage{amsmath}
\usepackage{amsthm}
\usepackage{mathtools}
\usepackage{lscape}
\usepackage{subcaption}
\usepackage{suffix}
\usepackage{color}
\usepackage{bm}
\usepackage{verbatim}
\usepackage{dsfont}

\usepackage[authoryear, sort, comma]{natbib}

\definecolor{darkblue}{rgb}{0.0,0.0,0.55}
\RequirePackage[colorlinks,citecolor=darkblue,urlcolor=darkblue,linkcolor=darkblue]{hyperref} 
\usepackage[normalem]{ulem}

\newtheorem{theorem}{\textsc{{Theorem}}}

\newtheorem{lemma}{\textsc{{Lemma}}}

\newtheorem{proposition}{\textsc{{proposition}}}
\newtheorem{definition}{\textsc{{Definition}}}
\theoremstyle{remark}
\newtheorem{remark}{{\upshape\textsc{Remark}}}
\newtheorem{assumption}{{\upshape\textsc{Assumption}}}

\usepackage{booktabs}
\usepackage{array}
\usepackage{multirow}
\usepackage{enumitem}
\input{commands}

\setcitestyle{citesep={;}}

\allowdisplaybreaks

\usepackage{algorithm, float}
\usepackage{algpseudocode}
\newlength{\continueindent}
\title{\vspace{-1cm} \baselineskip=20pt  \bf Bayesian Soft-Tree Poisson Process Model for  Covariate-Dependent Point Patterns}

\author{Pulong Ma \\
Department of Statistics, Iowa State University \\
Ames, IA 50011, USA}
\date{}
\begin{document}
\maketitle
\baselineskip=15pt

\begin{abstract}
Estimating covariate-dependent intensity functions from spatial point pattern is challenging when the integrated intensity function cannot be evaluated analytically. This challenge becomes even worse in Bayesian inference due to repeated integration of the intensity function over the high-dimensional covariate space. This paper develops a Bayesian soft-tree Poisson process model that combines adaptive covariate partitioning with smooth intensity function. The key idea is to use soft gating functions to smooth transitions between partition boundaries in a tree generating process while preserving analytic evaluation of integrated intensity. A Bayesian CART prior is adopted to learn the partition structure from data. Given the partition tree, conjugate priors are assigned on leaf parameters to enable analytic marginal likelihood computation. This paper also develops  a reversible-jump Markov chain Monte Carlo (RJ-MCMC) algorithm for posterior inference. Posterior consistency is established for the proposed model with respect to the Hellinger distance. Extensive simulation studies and an application to wildfire data confirm the advantages of the proposed method over existing methods. 
\par\medskip
\noindent 
{\bf Keywords: Data augmentation; Soft decision tree; Point process; Posterior consistency} 
\end{abstract}

\section{Introduction}

Spatial point patterns are often collected in many scientific disciplines including climate science, ecology, biology, seismology \citep{Cressie1993, Diggle2014, Gelfand2018}. Statistical analysis of these point pattens often relies on point processes. A popular class of point processes is Poisson (point) process (PP). Let $X\sim \mathcal{PP}(\lambda)$ denote a Poisson process with intensity function $\lambda$ on a domain $D\subset \mathbb{R}^d$, where $\lambda: D \mapsto (0, \infty)$ is a locally integrable function.  A constructive definition for  PP is as follows.  For any bounded subset $B\subset D$, the number of points in $B$, denoted by $N(B)$, is Poisson distributed with rate $\Lambda(B):=\int_B \lambda(\bfx) \mathrm{d} \bfx$; given $N(B)$, the points in $B$ are independently and identically distributed with \textit{location density} $f(\bfx):=\lambda(\bfx) / \Lambda(B)$. Here $\Lambda(B)$ is referred to as total intensity (over $B$); see \citet{Moller2004} for some backgrounds. If $\lambda(\cdot)$ is constant over $D$, $X$ is called a homogeneous PP, otherwise $X$ is an inhomogeneous PP. 

Modeling spatial point patterns requires flexible modeling for the intensity function. Given a realization $\bfx_{1:n}:=\{\bfx_1, \ldots, \bfx_n\}\subset D$ of $X$, the likelihood given the intensity function is $L(\lambda; \bfx_{1:n}) = \exp(-\Lambda(D)) \prod_{i=1}^n \lambda(\bfx_i)$. Several statistical methods have been proposed to model the intensity function. A well-known approach is to model  the intensity function with a latent Gaussian process via a link function, resulting in so called  the log-Gaussian Cox process (LGCP) \citep{Moller1998} when a log link function is used. In LGCP, the mean function of the Gaussian process (GP) often incorporates covariates and the covariance function is used to capture spatial dependence. This class of models is often applied to environmental data with spatial locations, where covariance function is defined over  a spatial domain.  A potential limitation of this approach to deal with multiple (or high-dimensional) covariates is that these covariates are required to be known to calculate the random integral. 

To perform inference, a standard practice to compute this random integral is to use quadrature approaches with a dense grid points over spatial domain. Most often, the covariates are sparsely observed over finite number of spatial locations. A preprocessing step is often used to interpolate these covariates over the spatial domain so that the numerical integral can be approximated over the quadrature grids; see \citet{Moller1998, Rue2009, Simpson2016}. Potential challenges in LGCP include the numerical approximation error for approximating the random integral and the lack of uncertainty quantification when interpolated environmental variables are used. For instance, GP related challenges arise when the number of quadrature grid points is large and the underlying smoothness of intensity is controlled by the smoothness of GP, which is often chosen to be second-order stationary in practice despite the fact that environmental processes often have heterogeneous spatial dependence structure over the entire domain.  

To address the computational challenges associated with LGCP, several strategies have been proposed in the literature. For GP challenges, efficient GP models such as Gaussian Markov random field \citep{Rue2009} have been used  to exploit sparse matrix operations for efficient computations \citep{Rue2009, Simpson2016}. To bypass the numerical integration problem in likelihood, data augmentation techniques are also exploited to allow efficient inference \citep{Donner2018, Adams2009}, where sigmoid GP is used for modeling intensity function via the sigmoid link function.

Another approach for modeling the intensity functions is focused on domain partitioning methods. Earlier methods include Voronoi tessellation  \citep[e.g.,][]{Meijering1953, Hinde1980, Barr2010}. More recently, tree-based methods \citep{Ma2026PPT, Lu2026, Lamprinakou2023} have been used for modeling intensity functions.   Voronois tessellation method often assumes the piecewise constant over Voronoi cells; in contrast, tree-based methods represent the intensity function over a single tree or additive trees that are formed by random recursive partitioning of the domain so that partition tree can be learnt from the data. 

In particular, \citet{Ma2026PPT} represents the intensity function as piecewise constants in a single tree structure that is formed by Bayesian CART prior \citep{Chipman1998, Chipman2002}. \cite{Lamprinakou2023} develop a statistical model called BART-PP that uses  Bayesian additive regression trees \citep[BART;][]{Chipman2010} for log-intensity functions. A non-Bayesian formulation is given by \cite{Lu2026} who develop a statistical model called XGBoostPP that uses XGBoost \citep{Chen2016, Natekin2013}. {Additive tree ensembles such as XGBoostPP and BART-PP are generally more flexible than single tree models for function approximation.} A potential limitation for XGBoostPP is that it does not provide uncertainty estimates for intensity function from the model; in contrast, BART-PP can provide Bayesian uncertainty quantification. However, BART-PP does not learn hyperparameters in leaf nodes but fixes them {\it a priori}, which would underestimate the uncertainty. A key benefit of tree-based methods is its ability to deal with potentially high-dimensional covariate space without requiring fully observed covariates over the domain since tree-based methods such as \citet{Ma2026PPT} and \citet{Lamprinakou2023} can obtain closed form formula in evaluating the random integral in the likelihood function due to piecewise constants assumptions.

Recognizing  the benefits of tree-based models, this paper focuses on modeling the intensity function directly through a tree structure. As noted in \citet{Ma2026PPT}, a potential limitation in using a hard-split tree is that the resulting posterior of the intensity function can have very sharp transitions across partition boundaries. To address this limitation, this paper exploits the ideas \citep{Irsoy2012, Jordan1994, Linero2018} to use a soft-gating function to enable smooth transitions across partition boundaries, where the degrees of smoothness across the boundary transitions will be estimated in a Bayesian paradigm. While it may seem that this would destroy the analytic expressions of the random integral, this paper will design a strategy to maintain analytic expression for the random integral, thus reducing the integral to one-dimensional closed-form factors, with numerical fallback when needed for stability. 

To allow adaptive estimation of the partitions, this paper couples the Bayesian CART prior \citep{Chipman1998, Chipman2002} with soft gating strategy to learn the partition. This leads to a hierarchical specification for the proposed model. To make Bayesian inference, this paper develops a novel data-augmentation strategy to address the computational challenges since with soft-gating functions, the likelihood of the intensity parameter with each leaf node is not conjugate to its prior distribution. This computational challenge is addressed by introducing a latent variable for each observation to mark their memberships within a node. Under this augmentation, the augmented likelihood is shown to allow conjugate update for intensity parameters so that  Bayesian inference could be performed  efficiently.

\textbf{Contributions:} The contributions of this paper ranging from statistical method, computational algorithm, and theory. In particular, this paper builds a Bayesian soft Poisson point process tree (\textit{S-PPT}) model for modeling point patterns in a potentially high-dimensional covariate space, for which many existing methods fail to cope with in many real-world applications. The paper also develops an efficient Bayesian inference procedure through a data-augmentation technique to allow conjugate updates for intensity parameters. Finally the paper also establishes strong consistency of the resulting posterior distribution under Hellinger distance in an asymptotic framework.

\section{Statistical Method}

\subsection{Bayesian Soft Poisson Process Tree}
Let $D = \prod_{j=1}^d [a_j, b_j]$ be the domain of interest. Suppose that we observe a realization $\bfx_{1:n}:=\{\bfx_1, \ldots, \bfx_n\}$ of $X$ on $D$.  
Let $T$ denote a tree drawn from the Bayesian CART prior $\pi(T)$ with some hyperparameters. Its detailed specification is given later on. Then the intensity function $\lambda(\cdot)$ is assumed to be a piecewise smooth function over the leaf nodes of the tree. Write $\mathcal L=\mathcal L(T)$ for the leaves and $\mathcal I=\mathcal I(T)$ for the internal nodes. The proposed soft Poisson process tree (\textit{S-PPT}) model can be specified hierarchically as follows. 
\begin{gather} \label{eqn: model for X}
X \sim \mathcal{PP}(\lambda), \\ \label{eqn: model for lambda}
\lambda(\bfx) = \sum_{v \in \calL} \lambda_v \phi_v(\bfx; \gamma), \\
\lambda_v | T, \kappa, \xi \overset{iid}{\sim} \mathrm{Ga}(\kappa, \kappa/\xi), \, \gamma \sim \pi(\gamma) \\
 T \sim \pi(T), 
\end{gather}
where $\phi_v(\cdot; \gamma)$ is a soft gating function with $\gamma=(\gamma_1,\ldots, \gamma_d)^\top$ denoting the vector of all dimension-specific gating parameters with $\gamma_j\overset{iid}{\sim} \mathrm{Ga}(a_0, b_0)$.  Equation~\eqref{eqn: model for lambda} has piecewise constants  $\lambda_v$ at each leaf that are smoothed by the gating function $\phi_v(\cdot; \gamma)$.

\paragraph{Bayesian CART prior} The Bayesian CART prior \citep{Chipman1998, Chipman2002} recursively splits the domain $v=D$ according to a partition rule defined by a splitting random variable $S(v) \sim \mathrm{Bern}(\rho)$, axis selection random variable $J(v) \sim \mathrm{Cat}(\zeta)$, and a cutting point random variable $L(v)=a_j + U_v(b_j-a_j)$  with $U_v\sim F_L$, where  $\rho$ is the (prior) splitting probability, $\zeta = (\zeta_1, \ldots, \zeta_d)^\top$ is the vector of prior selection probabilities satisfying $\zeta_i>0, \sum_{i=1}^d \zeta_i=1$. In Bayesian CART, $F_L$ is chosen to be the quantiles of the covariate values along selected axis in \citet{Chipman2002}. Starting with root node $v=D$, Bayesian CART prior firsts draw $S(v) \sim \mathrm{Bern}(\rho)$ with $\rho(v):=\alpha_0 (1 + \mathrm{depth}(v))^{-\beta_0}$. If $S(v)=0$, then the node $v$ is terminal; otherwise, draw $J(v)=j$ from its prior distribution $\mathrm{Cat}(\zeta)$ to select an axis and a cutting point $U=u\sim F_L$ with $L(v)=c=a_j + u(b_j-a_j)$ so that $v$ is split into two child nodes $v_\lchild, v_\rchild$:
\[
v_{\lchild} = [a_1,b_1]\times \cdots \times [a_j, c] \times \cdots [a_d, b_d]\qquad v_{\rchild} = [a_1,b_1]\times \cdots \times [c, b_j] \times \cdots [a_d, b_d].
\]
Then this procedure is applied recursively for each child node. 

\paragraph{Soft gating} The Bayesian CART prior recursively partitions the space with hard splits at the boundaries of subregions  causing sharp changes that may not be desirable. To allow gradual transitions across split boundaries, we replace binary routing decision with a smooth gating function. Each observation thus receives weights from multiple terminal nodes rather than belonging exclusively to one terminal region. This construction comes from hierarchical mixtures of experts \citep{Jordan1994} and soft decision trees \citep{Irsoy2012, Linero2018}. 

The gating function $\phi_v(\cdot;\gamma)$ is constructed recursively.
Let $W_j=b_j-a_j>0$ be the width of the root domain along coordinate $j$.
At an internal node $u$ with splitting coordinate $j_u$ and cut $c_u$,
define the right-routing function
\[
g_u(t;\gamma)=\left\{1+\exp\left[-\gamma_{j_u}
\frac{t-c_u}{W_{j_u}}\right]\right\}^{-1},\qquad \gamma_{j_u}>0.
\]
The root width is used at every node, so all gates on coordinate $j$
share the slope $\gamma_j/W_j$. Each cut is strictly inside its node's
interval on the selected coordinate. The left-routing function is
$1-g_u(t;\gamma)$, and hard routing is recovered away from the cut as
$\gamma_{j_u}\to\infty$. Starting with $\phi_{\mathrm{root}}=1$, set
\[
\phi_{u_\lchild}(\bfx)=\phi_u(\bfx)\{1-g_u(x_{j_u};\gamma)\},\qquad \phi_{u_\rchild}(\bfx)=\phi_u(\bfx)g_u(x_{j_u};\gamma).
\]
Thus $\phi_{u_\lchild}+\phi_{u_\rchild}=\phi_u$ and
$\sum_{v\in\mathcal L}\phi_v(\bfx)=1$.
For a node $v$, let $\downarrow\!v$ contain its internal ancestors and
let $\delta_{uv}\in\{0,1\}$ indicate whether its path turns right at $u$.
Then
\[
\phi_v(\bfx)=\prod_{u\in\downarrow\!v}
g_u(x_{j_u};\gamma)^{\delta_{uv}}
\{1-g_u(x_{j_u};\gamma)\}^{1-\delta_{uv}}.
\]
The following result gives the closed-form expression for  
$I_v:=\int_D\phi_v(\bfx;\gamma)\,\mathrm d\bfx$.

\begin{proposition}\label{prop: analytic exposure}
Let $T$ be finite and use the root-width gates above on
$D=\prod_{j=1}^d[a_j,b_j]$. Then
$
I_v=\prod_{j=1}^d J_{vj}.
$
If the path has no split on coordinate $j$, then $J_{vj}=W_j$.
Otherwise, suppress $v,j$ and denote its $m$ cuts and turn indicators
by $c_1,\ldots,c_m$ and $\delta_1,\ldots,\delta_m$. Put $W=W_j$,
$\gamma=\gamma_j$, $r=\sum_{k=1}^m \delta_k$, and
\[
q_k=\exp\!\left\{\frac{\gamma}{W}
       \left(c_k-\frac{a_j+b_j}{2}\right)\right\},\qquad C=\prod_{k:\delta_k=0}q_k,\qquad
A_k=\frac{C(-q_k)^{r-1}}
          {\prod_{\ell\ne k}(q_\ell-q_k)}.
\]
The interior-cut condition makes these cuts distinct along the path,
and
\[
J_{vj}=\frac W\gamma\left\{
\gamma\mathds{1}_{\{r=0\}}+
\sum_{k=1}^m A_k\log
\frac{e^{\gamma/2}+q_k}{e^{-\gamma/2}+q_k}
\right\}.
\]
\end{proposition}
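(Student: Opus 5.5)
The plan is to factor the integral over coordinates first and then reduce each one-dimensional factor to a partial-fraction computation. Since $\phi_v(\bfx)$ is a product over internal ancestors $u\in\downarrow\!v$ of factors depending only on $x_{j_u}$, grouping the factors by coordinate writes $\phi_v(\bfx)=\prod_{j=1}^d h_{vj}(x_j)$. Because $D$ is a box, Fubini gives $I_v=\prod_j J_{vj}$ with $J_{vj}=\int_{a_j}^{b_j}h_{vj}(t)\,\mathrm dt$. If no ancestor splits on $j$, then $h_{vj}\equiv1$ and $J_{vj}=W_j$. The cuts on the same coordinate along a path are distinct: each cut lies strictly inside the current node interval, and that interval is bounded by earlier cuts on the same coordinate. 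Hence the $q_k$ are distinct.

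For a fixed coordinate $j$, substitute $s=(\gamma/W)(t-(a_j+b_j)/2)$, so that $s\in[-\gamma/2,\gamma/2]$ and $\mathrm dt=(W/\gamma)\mathrm ds$. With $y=e^{-s}$ and $t-c_k$ rescaled, we get $e^{-\gamma(t-c_k)/W}=y\,q_k$. The gates become $g_k=1/(1+q_ky)$ and $1-g_k=q_ky/(1+q_ky)$. Therefore
\[
h(t)=\frac{C\,y^{m-r}}{\prod_{k=1}^m(1+q_ky)},
\]
where $C$ collects the $q_k$ with $\delta_k=0$. Changing variables once more with $\mathrm ds=-\mathrm dy/y$, the integral becomes $(W/\gamma)\int_{e^{-\gamma/2}}^{e^{\gamma/2}} C y^{m-r-1}/\prod_k(1+q_ky)\,\mathrm dy$. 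It is cleaner to set $z=1/y=e^{s}$ instead, which gives
\[
\frac W\gamma\int_{e^{-\gamma/2}}^{e^{\gamma/2}}\frac{C z^{r-1}}{\prod_k(z+q_k)}\,\mathrm dz .
\]
The degree of the numerator, $r-1$, is less than $m$. When $r\ge1$ the integrand is a proper rational function with simple poles at $z=-q_k$.

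The key step is the partial fraction decomposition. Since the poles are simple and distinct,
\[
\frac{Cz^{r-1}}{\prod_k(z+q_k)}=\sum_k\frac{A_k}{z+q_k},\qquad A_k=\frac{C(-q_k)^{r-1}}{\prod_{\ell\neq k}(q_\ell-q_k)},
\]
which matches the stated $A_k$. Integrating each term gives $A_k\log\{(e^{\gamma/2}+q_k)/(e^{-\gamma/2}+q_k)\}$, so the formula follows for $r\ge1$.

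The main obstacle, and the step to check carefully, is the boundary case $r=0$, where the integrand is $C/\{z\prod_k(z+q_k)\}$. It has an extra pole at $z=0$ with residue $C/\prod_k q_k=1$, because $C=\prod_k q_k$ when $r=0$. That residue integrates to $\log(e^{\gamma/2}/e^{-\gamma/2})=\gamma$, which produces the $\gamma\mathds 1_{\{r=0\}}$ term. The remaining residues at $-q_k$ are $C/\{(-q_k)\prod_{\ell\ne k}(q_\ell-q_k)\}$, which is exactly $A_k$ with $r-1=-1$. I would also verify the sign conventions in $(q_\ell-q_k)$ against $\prod_{\ell\neq k}(-q_k+q_\ell)$, which is immediate. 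Combining the coordinates then yields the stated expression for $I_v$.
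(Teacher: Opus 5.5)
Your proof is correct and follows the paper's route: factor over coordinates by Fubini, make the exponential substitution, then expand in partial fractions over simple poles and integrate term by term. Your $z=e^{s}$ is exactly the paper's $u$, so the detour through $y=e^{-s}$ can be dropped; you also check explicitly that the residue at $z=0$ is $C/\prod_k q_k=1$ when $r=0$, and why same-coordinate cuts along a path are distinct, two points the paper's proof only asserts.
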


The proof is given in Appendix~\ref{app: analytic exposure proof}.
The formula applies to rectangular regions under axis-aligned partitions.  The routing identities also give $I_{u_\lchild}+I_{u_\rchild}=I_u$ and
$\sum_{v\in\mathcal L}I_v=\prod_j W_j$.

\subsection{Likelihood Computation} \label{sec: likelihood}
Given $n$ points $\bfx_{1:n}$ and model parameters $T, \bflambda, \gamma$, the likelihood function under S-PPT is given by 
\begin{equation}\label{eqn: likelihood}
p(\bfx_{1:n}|\bflambda, \gamma, T) = \exp\left(-\sum_{v\in \calL} \lambda_v I_v  \right) \times \prod_{i=1}^n \left\{ \sum_{v\in \calL} \lambda_v \phi_v(\bfx_i)  \right\},
\end{equation}
where $\bflambda:=(\lambda_v)_{v\in\calL}$ is the collection of all intensity parameters at the leaf nodes of the tree. In this likelihood function, $N(D)=n$ is omitted for notational simplicity whenever there is no scope of confusion since $N(D)=n$ is a realization from a Poisson distribution resulting from a Poisson process. 
This likelihood has a product-summation form and it can be evaluated analytically; however, the conjugate prior for $\lambda_v$ does not ease the posterior update for $\lambda_v$ since the conditional posterior $p(\lambda_v| \bfx_{1:n}, T, \gamma, \bflambda_{-v}) \propto p(\bfx_{1:n}|\bflambda, \gamma, T) \pi(\lambda_v)$ is not a standard distribution to be sampled from, where $\bflambda_{-v}$ denotes the collection of all $\lambda_u$ for $u \in \mathcal{L}\setminus \{v\}$. As $\lambda_v$ is defined for each node for a given tree, posterior inference for $\bflambda$ becomes extremely challenging and slow.

\subsection{Posterior Inference} \label{sec: posterior}
Under the likelihood function~\eqref{eqn: likelihood}, the joint posterior of $T, \gamma, \bflambda$ is 
\begin{align}\label{eqn: joint posterior}
p(T, \gamma, \bflambda | \bfx_{1:n}) \propto p(\bfx_{1:n}|\bflambda, \gamma, T)  \pi(\bflambda | T) \pi(T) \pi(\gamma). 
\end{align} 
where $\pi(\bflambda | T) = \prod_{v \in \calL} p(\lambda_v | T)$ with $\lambda_v\sim \mathrm{Ga}(\kappa, \kappa/\xi)$. 
As discussed in Section~\ref{sec: likelihood}, directly updating $\bflambda$ is challenging. 
This paper develops a data-augmentation technique \citep{Tanner1987} to address this challenge.

\subsubsection{Data Augmentation}

For each point $\bfx_i$, let $z_i$ denote the latent leaf label assigned to $\bfx_i$. Let $n_v:=\#\{i: z_i = v\}$ be the number of points in node $v$. Conditional on $N(D)=n$ and $(T,\bflambda,\gamma)$, before conditioning on the locations, the labels $z_i$ are independent with probabilities 
$
\Pr(z_i=v| T, \bflambda, \gamma) = {\lambda_vI_v }/\{\sum_{u\in\calL} \lambda_u I_u\}.
$

Given these latent variables, the likelihood function~\eqref{eqn: likelihood} can be augmented: Given all points in $v$ and the label $z_i=v$, the conditional probability of all points in $v$ has the Poisson process likelihood. Thus, given tree $T$, the augmented likelihood function over all $n$ points becomes
\begin{align} \label{eqn: augmented likelihood}
p(\bfx_{1:n}, z |\bflambda, \gamma, T) = \prod_{v\in \calL} e^{-\lambda_v I_v} \lambda^{n_v}_v \prod_{i: z_i=v}  \phi_{z_i}(\bfx_i). 
\end{align}
where $z:=(z_1,\ldots, z_n)^\top$. This augmented likelihood function has a product form over each node. It is immediate that integrating out the intensity $\lambda_v$ with respect to its prior $\pi(\lambda_v) = \mathrm{Ga}(\kappa, \kappa/\xi)$ yields a closed form expression:
\begin{equation}\label{eqn: integrated likelihood}
p(\bfx_{1:n}, z |\gamma, T) = \int p(\bfx_{1:n}, z |\bflambda, \gamma, T) \prod_{v\in \calL} \pi(\lambda_v) \mathrm{d}  \bflambda = \prod_{v \in \calL} h(n_v, I_v) \prod_{i=1}^n \phi_{z_i}(\bfx_i)
\end{equation}
where $h(m,I):= {(\kappa/\xi)^{\kappa} \Gamma(\kappa+m)}/\{\Gamma(\kappa) (\kappa/\xi + I)^{\kappa + m}\}$. This integrated likelihood function can be used to design a collapsed Gibbs sampler. The following proposition gives the collapsed full conditional distribution of each allocation $z_i$:
\begin{proposition}\label{prop: color probability} 
For $i=1,\ldots, n$, denote the number of points in $v$ except the $i$th point by $n_{v}^{-i}:=\#\{j \neq i: z_j = v\}$, the conditional probability of $z_i=v$ given $z_{-i},\bfx_{1:n},T,\gamma$ is 
$
\Pr(z_i=v| z_{-i},\bfx_{1:n},T,\gamma) \propto \phi_v(\bfx_i) ( \kappa + n_v^{-i}) / (\kappa/\xi + I_v). 
$
\end{proposition}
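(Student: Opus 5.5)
The plan is to read the full conditional directly off the collapsed joint density in \eqref{eqn: integrated likelihood}. For fixed $z_{-i}$, $\bfx_{1:n}$, $T$ and $\gamma$,
\[
\Pr(z_i=v\mid z_{-i},\bfx_{1:n},T,\gamma)\propto p(\bfx_{1:n}, z_{-i}, z_i=v\mid\gamma,T).
\]
Any factor of the right-hand side that does not depend on the value $v$ assigned to $z_i$ can be dropped.

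\textbf{Step 1: isolate the factors that move.} Setting $z_i=v$ changes the counts to $n_v=n_v^{-i}+1$, while $n_u=n_u^{-i}$ for every $u\neq v$. The exposures $I_u$ depend only on $(T,\gamma)$ and are unaffected. Hence the joint density equals
\[
h(n_v^{-i}+1,I_v)\prod_{u\neq v}h(n_u^{-i},I_u)\;\phi_v(\bfx_i)\prod_{j\neq i}\phi_{z_j}(\bfx_j).
\]
Multiply and divide by $h(n_v^{-i},I_v)$. The product $\prod_{u\in\calL}h(n_u^{-i},I_u)\prod_{j\neq i}\phi_{z_j}(\bfx_j)$ is then free of $v$, leaving
\[
\Pr(z_i=v\mid\cdots)\propto \phi_v(\bfx_i)\,\frac{h(n_v^{-i}+1,I_v)}{h(n_v^{-i},I_v)}.
\]

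\textbf{Step 2: compute the ratio of $h$'s.} From the definition of $h$, the constants $(\kappa/\xi)^\kappa/\Gamma(\kappa)$ cancel. The identity $\Gamma(\kappa+m+1)/\Gamma(\kappa+m)=\kappa+m$ then gives
\[
\frac{h(m+1,I)}{h(m,I)}=\frac{\kappa+m}{\kappa/\xi+I}.
\]
Substituting $m=n_v^{-i}$ and $I=I_v$ yields the claimed expression.

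\textbf{Main point to justify.} The only subtle step is that the right object to condition on is the collapsed density \eqref{eqn: integrated likelihood}, i.e.\ that it is a genuine joint density in $(\bfx_{1:n},z)$. This follows because \eqref{eqn: augmented likelihood} sums over $z$ to \eqref{eqn: likelihood}: expanding the product $\prod_i\sum_v\lambda_v\phi_v(\bfx_i)$ over label assignments and using $\sum_v\lambda_v I_v$ in the exponent recovers it. Integrating out $\bflambda$ commutes with that sum, so conditioning is legitimate.

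Normalizing over $v\in\calL$ gives the exact probabilities. These are well defined because $\phi_v(\bfx_i)>0$ for every leaf, since the logistic gates are strictly positive.
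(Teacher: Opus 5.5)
Your proposal is correct and follows the paper's own proof: you drop the $v$-free factors of the collapsed density \eqref{eqn: integrated likelihood} to get $\phi_v(\bfx_i)\,h(n_v^{-i}+1,I_v)/h(n_v^{-i},I_v)$, then simplify with $\Gamma(\kappa+m+1)=(\kappa+m)\Gamma(\kappa+m)$. Your added check that \eqref{eqn: augmented likelihood} sums over $z$ to \eqref{eqn: likelihood}, which justifies conditioning on the collapsed joint density, is correct and is a point the paper leaves implicit.
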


The proof is given in Appendix~\ref{app: color probability proof}.
The factor $( \kappa + n_v^{-i}) / (\kappa/\xi + I_v)$ is the mean of the $\mathrm{Ga}(\kappa + n_v^{-i}, \kappa/\xi + I_v)$ distribution which is the posterior of $\lambda_v$ computed from other observations. The term $\kappa + n_v^{-i}$ implies that leaves that already hold many points are more likely to receive point $i$. 

\subsubsection{Partially Collapsed Gibbs Sampler}
The joint posterior distribution of $T, z, \gamma, \bflambda$ is 
\begin{equation}\label{eqn: augmented posterior}
p(T, z, \gamma, \bflambda | \bfx_{1:n})  \propto p(\bfx_{1:n}, z |\bflambda, \gamma, T)  \pi(\bflambda | T) \times  \pi(T) \pi(\gamma)
\end{equation}
Following \citet{vanDyk2015} and \citet{vanDyk2008}, use the following partially collapsed scan. The $z$ step restores the labels after they are marginalized in the $\gamma$ step. Proposition~\ref{prop: color probability} gives a separate collapsed allocation conditional.

\begin{itemize}
\item Update $\bflambda$: Sample $\lambda_v | T, z, \gamma, \bfx_{1:n} \sim \mathrm{Ga}(\kappa + n_v, \kappa/\xi + I_v)$ for $v\in \calL$;
\item Update $\gamma$: Use a joint Metropolis--Hastings step on $\log\gamma$ targeting
\[
p(\gamma\mid T,\bflambda,\bfx_{1:n})\propto\pi(\gamma)p(\bfx_{1:n}\mid\bflambda,\gamma,T),
\]
with adaptation during warm-up only; see Appendix~\ref{app: gate update}.

\item Update $z$: Sample $z_i$ from a Categorical distribution with $\Pr(z_i=v| \bflambda, T, \gamma, \bfx_{1:n}) \propto \lambda_v \phi_v(\bfx_i) $ for $i=1,\ldots, n$. 
\item Update $(T,z)$: Holding $\gamma$ fixed and integrating out
$\bflambda$, use a joint Metropolis--Hastings update targeting
$\pi_c(T,z)\propto\pi(T)p(\bfx_{1:n},z\mid\gamma,T)$,
with the integrated likelihood in Eq.~\eqref{eqn: integrated likelihood}.
The GROW, PRUNE, and CHANGE proposals and their acceptance
probabilities are given in Appendix~\ref{app: PCGS}.

\end{itemize}
After the tree step, refresh $\bflambda$ from its Gamma full conditional
before recording a retained draw. This restores the joint posterior
from the rate-marginal state.

\section{Posterior Consistency} \label{sec: theory}
This section studies the frequentist property of Bayes posterior under the proposed model. 

\subsection{Notation and Preliminaries}
Let $\mu$ be a fixed diffuse Borel measure \citep[Proposition~9.1.III(i)]{Daley2008} on $D=[0,1]^d$ with  fixed $d$ and $0<\mu(D)<\infty$. Then  
  $\mu(\{\bfx\})=0$ for every $\bfx\in D$. Denote the space of nonnegative and integrable intensities by 
\begin{align*}
\Theta:=\left\{\lambda: D\mapsto [0, \infty): \int_D \lambda \mathrm{d}\mu <\infty \right\}.
\end{align*}
and its Borel $\sigma$-algebra under the $L^1(\mu)$ norm by $\mathcal{B}(\Theta)$. Write $\pi$ for the joint prior on $(T, \bflambda, \gamma)$ in the Soft PPT model, and $\Pi$ for the induced law of $\lambda$ on $\Theta$.

For $\lambda\in\Theta$, write
\[
\mathbb{P}_\lambda=\mathcal{PP}(\lambda\mu),\qquad
Q=\mathcal{PP}(\mu),\qquad p_\lambda=\frac{\mathrm d\mathbb{P}_\lambda}{\mathrm dQ}.
\]
The density exists by \citet[Theorem~3]{Leskela2024}. Let
$\mathbb{P}_0=\mathbb{P}_{\lambda_0}$ and $p_0=p_{\lambda_0}$ for the true intensity
$\lambda_0$. The intensity Hellinger distance is
\[
d_H(\lambda,\lambda_0)
=\left\{\int_D(\sqrt\lambda-\sqrt{\lambda_0})^2\,\mathrm d\mu\right\}^{1/2}.
\]
Let $X_1,X_2,\ldots\sim \mathbb{P}_0$ be independent copies of the process $X$, and set
\[
X^{(k)}=(X_1,\ldots,X_k),\qquad \mathbb{P}_\lambda^{(k)}=\mathbb{P}_\lambda^{\otimes k},\qquad \mathbb{P}_0^{(k)}=\mathbb{P}_{\lambda_0}^{(k)},\qquad \mathbb{P}_0^\infty=\mathbb{P}_0^{\otimes\mathbb N}.
\]
Write $\mathbb{E}_\lambda^{(k)}$ for expectation under $\mathbb{P}_\lambda^{(k)}$,
and $\mathbb{E}_0^{(k)}$ when $\lambda=\lambda_0$.
This is the iid framework of \citet[Ch.~6.4]{Ghosal2017}.
For $B\in\mathcal B(\Theta)$ and $S_k=\bigcup_{i=1}^kX_i$,
\begin{equation}\label{eqn: posterior}
\Pi(B\mid X^{(k)})
=\frac{\int_B\prod_{i=1}^kp_\lambda(X_i)\,\Pi(\mathrm d\lambda)}
       {\int_\Theta\prod_{i=1}^kp_\lambda(X_i)\,\Pi(\mathrm d\lambda)} =\Pi(B\mid S_k),
\end{equation}
$\mathbb{P}_0^\infty$-almost surely.

\begin{definition}[{\citet[][Definition 6.1]{Ghosal2017}}] \label{def: consistency}
Let $A_\epsilon:=\{\lambda\in\Theta: d_H(\lambda, \lambda_0) >\epsilon\}$. The posterior is said to be strongly consistent at $\lambda_0$ with respect to $d_H$ if $\mathbb{P}_0^\infty\{\forall \epsilon>0: \lim_{k\to\infty} \Pi(A_\epsilon\mid X^{(k)})=0\}=1$. 
\end{definition}
This posterior consistency can be verified via three sufficient conditions with the framework laid out by \citet{Ghosal2017} which are given as follows.

\subsection{Sufficient Conditions}
Denote the Kullback-Leibler (KL) divergence from $p_0$ to $p_\lambda$ by $K(p_0; p_\lambda):=\int p_0 \log (p_0/p_{\lambda}) \mathrm{d}Q = \int_D \{\lambda_0 \log(\lambda_0/\lambda) - \lambda_0 + \lambda\} \mu(\mathrm{d}\bfx)$, where the second equation due to Theorem 6 of \citet{Leskela2024}. The following three conditions are as follows. 
\begin{itemize}[nosep]
\item[\textbf{C1}.]  Prior mass condition. For every $\epsilon>0$, $\Pi\{\lambda: K(p_0; p_\lambda)<\epsilon\}>0$.
\item[\textbf{C2}.] Sieve condition. For each fixed $\epsilon>0$, there exist Borel sets $\calF_k\subset\Theta$ and a constant $c_1>0$ such that $\Pi(\calF_k^c)\le e^{-c_1k}$ for all sufficiently large $k$.
\item[\textbf{C3}.] Test condition. For the same $\epsilon$ and sieve sequence, there exist measurable tests $\psi_k=\psi_k(X^{(k)})\in[0,1]$ and constants $c_2,c_3>0$ such that
\begin{align*}
\mathbb{E}_0^{(k)}[\psi_k]&\le e^{-c_2k},
\sup_{\lambda\in\calF_k\cap A_\epsilon}
\mathbb{E}_\lambda^{(k)}[1-\psi_k]&\le e^{-c_3k}
\end{align*}
for all sufficiently large $k$. The sieves and constants in C2--C3 may depend on $\epsilon$.
\end{itemize}
If a posterior satisfies these three conditions, Theorem 6.17 of \citet{Ghosal2017} implies that the posterior is consistent in the sense of Definition~\ref{def: consistency}. Verification of these conditions for the soft PPT model is given in the Appendix. The assertion of the consistency of the posterior under these three conditions is also verified. Next, assumptions on the truth intensity function, model priors are stated explicitly. 

\subsection{Main Theoretical Result} 
\begin{assumption}\label{ass: truth}
Assume that the true intensity function is continuous, i.e., $\lambda_0\in \mathcal{C}(D)$, and is also bounded away from zero and $\infty$, i.e., $0<\lambda_-\leq \lambda_0(\bfx) \leq \lambda_+<\infty, \forall\bfx\in D$ for some constants $\lambda_-, \lambda_+$. Let $\mu$ be a diffuse measure with $\mu(D)\in (0, \infty)$. 
\end{assumption} 

\begin{assumption}\label{ass: tree prior}
Let $\alpha_0\in (0,1/2)$ be fixed. Under Bayesian CART prior, for any node $v$ at depth $h$, assume that $\rho(v) = \alpha_0^{h+1}$ for $h=0,1,\ldots$. $F_U((0,1)) =1$ and $F_U((1/2-e,1/2+e))>0$ for every $e\in (0,1/2)$. 
\end{assumption}
\begin{remark}
Note that the splitting probability assumption follows \citet{Rockova2019}, which is different from the employed prior in the model. Any of the following cut distributions $F_U$ satisfy the above assumption: (i) $F_U = \delta_{1/2}$, (ii) $F_U=\mathrm{Unif}(0,1)$, and (iii) $F_U=\sum_{i=1}^M w_i\delta_{u_i}$ where $u_i\in(0,1)$, $w_i>0$, $\sum_{i=1}^M w_i=1$, and $u_i=1/2$ for at least one $i$. Here $\delta_a$ denotes a point mass at $a$. 

\end{remark}

\begin{assumption} \label{ass: hyperparameters}
Let $\kappa, \xi\in(0,\infty)$ be fixed. Assume that $F_U((1/2-\epsilon, 1/2+\epsilon))>0$ for every $\epsilon\in (0, 1/2)$.  Suppose that the prior on $\gamma$ has a Lebesgue density on $(0,\infty)^d$ that has positive support on $(0,\infty)^d$. 
\end{assumption}

Under these assumptions, posterior consistency can be shown to hold true and is stated in the following result. 

\begin{theorem} \label{thm: posterior consistency}
If Assumptions~\ref{ass: truth},~\ref{ass: tree prior}, and~\ref{ass: hyperparameters} hold true, the posterior distribution~\eqref{eqn: posterior} is strongly consistent at $\lambda_0$ with respect to Hellinger distance $d_H$.
\end{theorem}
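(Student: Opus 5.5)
The proof will check conditions C1--C3 and then invoke Theorem~6.17 of \citet{Ghosal2017}, as the framework of Section~\ref{sec: theory} allows. One structural fact is used throughout. Every draw from the S-PPT prior has the form $\lambda=\sum_{v\in\calL}\lambda_v\phi_v$ with $\phi_v\ge 0$ and $\sum_v\phi_v=1$. Hence $\min_v\lambda_v\le\lambda(\bfx)\le\max_v\lambda_v$ for all $\bfx$, so $\lambda$ is continuous and bounded by its leaf values.

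\textbf{Step C1 (prior mass).} By uniform continuity of $\lambda_0$ on the compact set $D$, pick a depth $h$ such that $\lambda_0$ oscillates by less than $\eta$ on every dyadic cube of side $2^{-h}$ per coordinate. Consider the event in which the tree is the complete balanced tree that splits each coordinate cyclically $h$ times, so there are $dh$ levels. On this event all splits occur at depth $<dh$, all nodes at depth $dh$ stop, and the cuts lie within $e$ of the midpoints, so the leaves are within $O(e)$ of the dyadic cubes. This event has positive prior probability because of Assumption~\ref{ass: tree prior} ($\alpha_0^{h+1}\in(0,1)$ and $F_U$ charges neighborhoods of $1/2$, with $\zeta_j>0$). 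Next, the Gamma prior gives positive mass to the event $|\lambda_v-\lambda_0(\text{center}_v)|<\eta$ for all leaves. Finally, the density of $\gamma$ gives positive mass to $\{\gamma_j>M\ \forall j\}$.

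On the intersection of these events, $\phi_v$ converges to the indicator of leaf $v$ at every point off the cut hyperplanes as $M\to\infty$, and the cut hyperplanes are $\mu$-null. The $\mu$-null claim is the subtle point, because $\mu$ is only diffuse and may charge hyperplanes. I would handle it by randomizing the cut inside $(1/2-e,1/2+e)$ when $F_U$ is diffuse. When $F_U$ has atoms, I would instead note that the soft gates never assign weight outside $[\min_v\lambda_v,\max_v\lambda_v]$, and that all leaf values lie in $[\lambda_--2\eta,\lambda_++2\eta]$. So even on a charged hyperplane, $\lambda$ is a convex combination of neighbouring leaf values, each within $2\eta+o(1)$ of $\lambda_0$ there by continuity. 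This actually gives uniform closeness $\|\lambda-\lambda_0\|_\infty\lesssim\eta$ once the gates are sharp enough away from boundaries, together with continuity near them. The integrand $\lambda_0\log(\lambda_0/\lambda)-\lambda_0+\lambda$ is $O((\lambda-\lambda_0)^2/\lambda_-)$ when both functions lie in a compact subset of $(0,\infty)$. Hence $K(p_0;p_\lambda)\lesssim\eta^2\mu(D)<\epsilon$, and this holds on a set of positive prior probability.

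\textbf{Step C2 (sieve).} Take $\calF_k=\{\lambda:\ |\calL(T)|\le \bar L_k,\ \max_v\lambda_v\le M_k\}$ with $\bar L_k=k/\log k$ and $M_k=e^{k}$. The Gamma tail gives $\Pi(\max_v\lambda_v>M_k,\ |\calL|\le \bar L_k)\le \bar L_k e^{-c\kappa M_k/\xi}$, which is tiny. For the leaf count, the choice $\rho(v)=\alpha_0^{h+1}$ yields $\Pi(|\calL|>L)\le e^{-cL\log L}$, as in \citet{Rockova2019}. With $\bar L_k=k/\log k$ this is at most $e^{-ck}$ up to constants; I would take $\bar L_k=Ck/\log k$ with $C$ large. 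No constraint on $\gamma$ is needed.

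\textbf{Step C3 (tests), the main obstacle.} By the Le Cam--Birgé construction in the iid framework \citep[Ch.~6.4]{Ghosal2017}, it suffices to bound the Hellinger metric entropy $\log N(\epsilon/8,\calF_k,d_H)\le c k$ with $c$ small. The Hellinger affinity between Poisson processes, $\exp\{-\tfrac12 d_H^2\}$, then yields exponentially powerful tests against $d_H$-balls. Covering is hard because $\gamma$ is unbounded and cuts are continuous, so the gating functions do not form a compact family. My plan is to use $d_H^2\le\|\lambda-\lambda'\|_{L^1(\mu)}$ and to cover in $L^1(\mu)$ instead. For a fixed tree topology, cut values and $\gamma$ could be discretized only if the map to $\lambda$ were uniformly continuous, which fails as $\gamma\to\infty$. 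I would therefore split the sieve further into $\gamma\le G_k$, where Lipschitz bounds in $(c,\gamma)$ of order $G_k M_k$ give entropy $\lesssim \bar L_k\log(G_kM_k/\epsilon)$, and $\gamma>G_k$, which is approximated by hard trees up to the $\mu$-mass of thin slabs around cuts. That mass is not uniformly small for a general diffuse $\mu$, and this is the step most likely to need care. A simpler fallback is to cut the sieve to $\max_v\lambda_v\le M$ fixed and to handle large $\lambda$ separately through tests based on the total count $N(D)$. The number of topologies is $\le e^{c\bar L_k\log \bar L_k}$, and with $\bar L_k=\delta k/\log k$ the total entropy is $\le c'\delta k$, which is small as required. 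Strong consistency then follows from Theorem~6.17 of \citet{Ghosal2017}.
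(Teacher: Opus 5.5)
Your C1 argument is essentially sound. The far-leaf/near-leaf bound you sketch holds uniformly over all $\gamma_j>M$ and all cuts near the midpoints; this is Step~2 of Lemma~\ref{lem: KL support}, and it even spares you that lemma's continuity step.

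The genuine gap is in C3, exactly where you suspected. Under the paper's assumptions $\mu$ is only diffuse, so for $d\ge2$ a coordinate marginal $\mu_j$ can have atoms. For example, take $\mu$ uniform on the slice $\{x_1=1/2\}$, so that $\mu_1=\delta_{1/2}$. A gate with cut $c=1/2+s$ and slope $\gamma$ takes the value $(1+e^{\gamma s})^{-1}$ at that atom. As $s$ varies, this sweeps out essentially all of $(0,1)$, however large $\gamma$ is. Hence the region $\gamma>G_k$ is not uniformly $L^1(\mu)$-close to hard trees, and the thin-slab mass does not vanish uniformly in $c$.

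Nor can you discard $\{\gamma>G_k\}$ into the sieve complement. Assumption~\ref{ass: hyperparameters} puts no tail condition on $\pi(\gamma)$, so achieving $\Pi(\gamma_j>G_k)\le e^{-ck}$ may require $\log G_k\gg\log k$. Your Lipschitz entropy $\bar L_k\log(G_kM_k/\epsilon)$ is then no longer $O(k)$.

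The missing idea is a covering of the gate class that is uniform in both the slope and the measure. The class $\mathcal G=\{y\mapsto(1+e^{-(y-c)/\tau})^{-1}:c\in[0,1],\ \tau>0\}$ is a VC-subgraph class of index at most $4$, because $a<G_{c,\tau}(y)$ is a threshold condition on the affine function $(y-c)/\tau$. Theorem~2.6.7 of van der Vaart and Wellner (1996) then gives $N(\eta,\mathcal G,L^1(Q))\le C\eta^{-3}$ for every probability measure $Q$, in particular $Q=\mu_j$. Combine this with the telescoping bound $\|f-f'\|_1\le B_k\sum_{v\in\calI}\|g_v-g_v'\|_1+V\|\bflambda-\bflambda'\|_\infty$. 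The result covers the whole sieve of soft trees with at most $J_k$ leaves, leaf values in $[0,B_k]$ and arbitrary gates, at cost $\log N\lesssim J_k\log k$. No split by $\gamma$ is needed.

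There is also a concrete error in your sieve bounds. With $M_k=e^k$, the leaf-value grid alone contributes roughly $\bar L_k\log(M_kV/\epsilon^2)\asymp\bar L_k k\asymp\delta k^2/\log k$ to the log-covering number. The same $\log M_k$ sits in your $\gamma\le G_k$ bound. This is superlinear in $k$, so the local tests cannot be combined by a union bound.

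You need $M_k$ polynomial in $k$. The paper takes $B_k=2\xi k/\kappa$, for which $\Pr(Z>B_k)\le 2^\kappa e^{-k}$ still gives C2, while $\log B_k=O(\log k)$ keeps the entropy of order $J_k\log k\asymp\delta k$. Likewise, $\bar L_k$ must be $\delta k/\log k$ with $\delta$ small relative to the testing exponent, not $Ck/\log k$ with $C$ large. C2 tolerates small $\delta$, because Theorem~6.17 only needs some positive exponent.

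Your fallback with a fixed bound $M$ does not repair this. First, $\Pi(\max_v\lambda_v>M)$ is then a fixed positive constant, so that region must be handled by tests. Second, a large value on a leaf whose gate weight has small $\mu$-mass barely changes $\int\lambda\,\mathrm d\mu$, so tests based on $N(D)$ alone cannot separate such $\lambda$ from $\lambda_0$.
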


\begin{remark}
This result provides theoretical guarantee that the proposed soft-tree PPT model can estimate the intensity well under the proposed asymptotic framework. As illustrated in Section~\ref{sec: numerical}, intensity function under the S-PPT model can be well estimated. 
\end{remark}



\section{Numerical Illustrations} \label{sec: numerical}

The goal of numerical examples is to study its performance under both low-dimensional covariate space and high-dimensional covariate space. In low-dimensional space ($d\leq 2$), we can compare the proposed model with the widely-used LGCP since there is not available implementation for higher-dimensional space with $d>2$. In particular, we will use the INLA-approximated LGCP available  in the \textsf{R} package \texttt{inlabru} \citep{Lindgren2024}. For $d>2$, BART-PP \citep{Lamprinakou2023} and PPT \citep{Ma2026PPT} are closely related Bayesian competitors, however there is no available software code available for BART-PP. However, the code for XGBoostPP \citep{Lu2026} is available online and will be used for comparison in higher dimensions.   The PPT model is very similar to the proposed model except that PPT uses hard split for the model and sequential Monte Carlo (SMC) method for inference. It is implemented in the \texttt{R} package \texttt{poistree}\footnote{The \textsf{R} package \texttt{poistree} is available at \url{https://github.com/pulongma/poistree}}. 

To evaluate performance of the proposed model, we will mainly focus on comparing the estimated intensity function via scaled root mean square error (sRMSE) and log posterior predictive density (lppd) over testing datasets. Their formulas are given in Appendix~\ref{app: metrics}. sRMSE assesses the property of mean estimator of intensity function while lppd assess the predictive performance for the entire predictive distribution which is a desirable metric for evaluating Bayesian models. Smaller sRMSE indicates better estimation of intensity function while larger lppd indicates better performance of predictive distribution for the intensity. All the numerical experiments are performed on a 24GB memory Macbook Pro with Apple M4 Pro chip and 20 cores. All the timing results are reported in minutes. 
 
\subsection{2-D Example}

This example  is simulated from an inhomogeneous Poisson point process on the unit square \([0,1]^2\), with a smooth multi-scale spatially varying intensity
$
\lambda_0(\bfx)=C_0 f(\bfx)
$
with their specific formulas given in Appendix~\ref{app: 2D Example}. This example generates about 2500 random points using the Poisson thinning algorithm \citep[][p~11]{Moller2004}. 

Figure~\ref{fig: 2D example} compares the estimated intensity functions under LGCP, PPT, and S-PPT with the truth intensity function. As we can see from Figure~\ref{fig: 2D example}, both S-PPT and LGCP performs better than PPT and XGBoostPP, however S-PPT gives a much closer intensity surface to the truth. The INLA-LGCP approximates the latent GP with Mat\'ern-driven stochastic partial differential equation, which shows over-smooth behaviors. Both S-PPT and PPT show good performance in  capturing the sharp changes in the truth, however S-PPT allows smoother transitions than PPT. XGBoostPP shows some clearly sharp changes even though it is an ensemble tree method. This suggests that the Bayesian tree method with soft gating in S-PPT can better smooth over partition boundaries. Table~\ref{table: 2D example} confirms that S-PPT gives best performance in terms of sRMSE and lppd, although it takes much longer computing time than the other two methods. Here, S-PPT is run with 5,000 burn-in iterations and additional 10,000 iterations for two MCMC chains.

\begin{figure}[ht] 
\vspace{.3in}
\centering
\includegraphics[width=\linewidth]{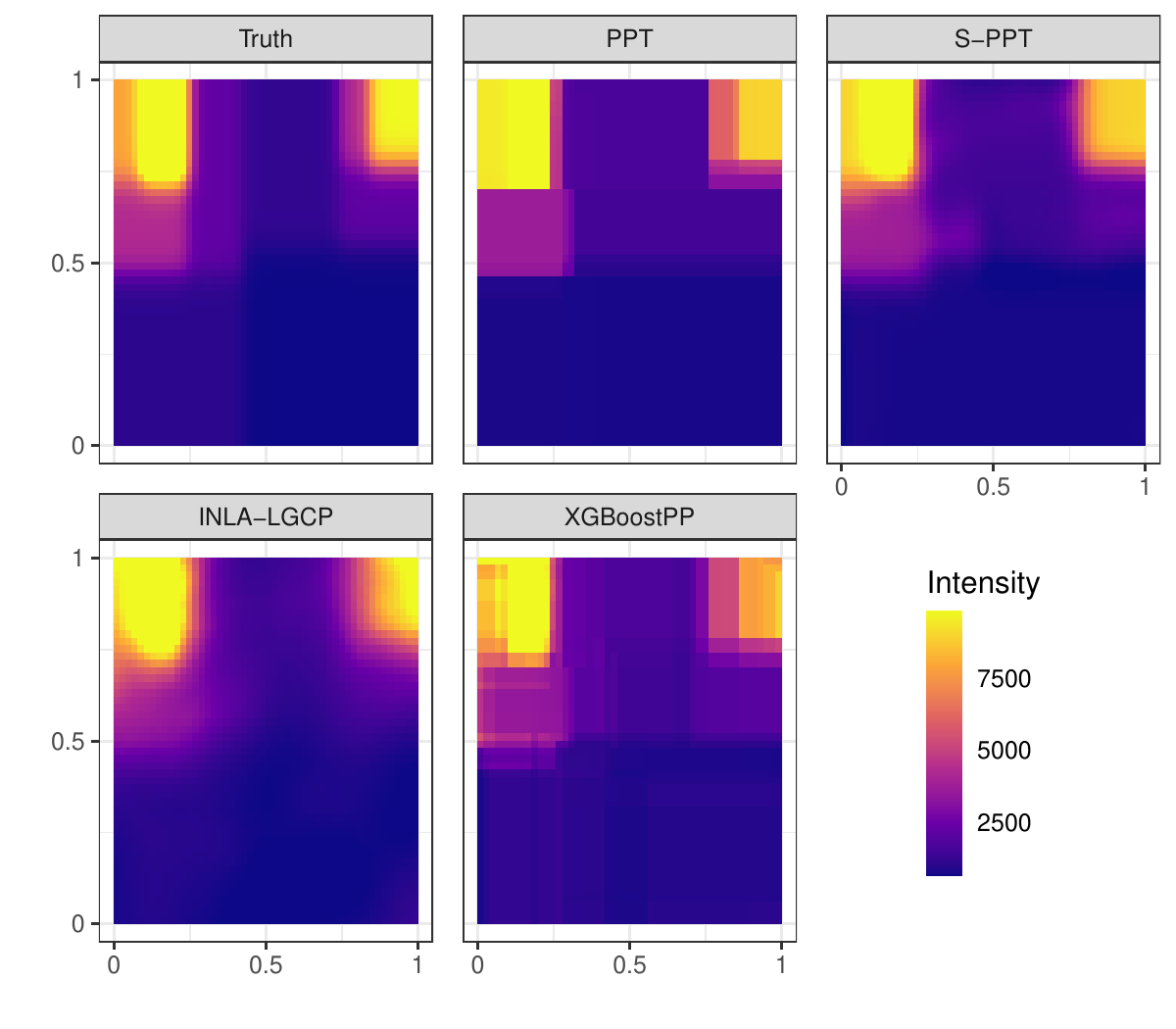}
\vspace{.3in}
\caption{Comparison of estimated intensities}
\label{fig: 2D example}
\end{figure}

\begin{table}[ht]
\caption{Performance in the 2-d example} \label{table: 2D example}
\begin{center}
\begin{tabular}{lll l}
\textbf{Model}  &\textbf{sRMSE} & \textbf{lppd} & Time\\ \hline \\
PPT         &0.290 & 2736 & 0.02\\
S-PPT             &\textbf{0.190} & \textbf{2742} & 3.77 \\
LGCP             &0.234 & 2738 & 0.68\\
XGBoostPP   & 0.233 & 2739 & 0.01 
\end{tabular}
\end{center}
\end{table}

\subsection{10-D Example}
For $\bfx=(x_1,\ldots,x_{10}) \in [0,1]^{10}$, the true intensity function is of the form 
$\lambda_0(\bfx) = n_0\frac{f_{10}(\bfx)}{Z},$
where $ Z=\int_Df_{10}(\bfx)\,d\bfx$ the normalizing constant and $n_0=15{,}000$ is expected number of points. The expression for $f_{10}(\cdot)$ is given in Appendix~\ref{app: 10d example}. combines a discontinuous three-level effect, two smooth oscillatory effects, a continuous cusp, a polynomial bump, and two bivariate terms with different regularity so that a nonlinear intensity with different degrees of smoothness imposed on different variables.  The variables $x_1,\ldots, x_8$ are active while  $x_9$ and $x_{10}$ are inactive. The dataset is simulated over the domain using thinning algorithm with $n_0$ expected number of points. The dataset is randomly split into $80/20\%$ for training and testing. 

Table~\ref{table: 10D example} shows that S-PPT performs best in terms of sRMSE and lppd. This example is encouraging since S-PPT is a single tree model and XGBoostPP is an ensemble tree model. One possible explanation is that its posterior distribution under S-PPT is an average of individual posterior distributions across posterior tree samples; as a result, any posterior estimates from S-PPT would behave somewhat like an estimate from an ensemble tree model.  This might justify the better performance of S-PPT over XGBoostPP in this particular example. Both S-PPT and XGBoostPP perform  better than the PPT model because  the underlying intensity favors a smoother intensity model unlike PPT.

\begin{table}[ht]
\caption{Performance in the 10-d example} \label{table: 10D example}
\begin{center}
\begin{tabular}{lll l}
\textbf{Model}  &\textbf{sRMSE} & \textbf{lppd} & Time\\
\hline \\
PPT         &0.221 & 21789 & 0.06\\
S-PPT             &\textbf{0.184} & \textbf{21818} & 131 \\
XGBoostPP   &0.213 & 21791 & 0.02\\
\end{tabular}
\end{center}
\end{table}

Figure~\ref{fig: 10D example}  also shows that S-PPT gives reasonable well estimates for the intensity function as the estimated intensity against true intensity aligns well along the 45 degree line. In Appendix~\ref{app: 10d example}, we also demonstrate how the relationship between intensity function and each variable is captured. In general, Figure~\ref{fig: 10D example marginal plot} shows both S-PPT and XGBoost can estimate several simple relationships well and S-PPT can correctly identify flat (or constant) relationship. For some marginal relationships such as $x_5$, both S-PPT and XGBoostPP did not capture it very well. 

\begin{figure}[ht] 
\vspace{.3in}
\centering
\includegraphics[width=\linewidth]{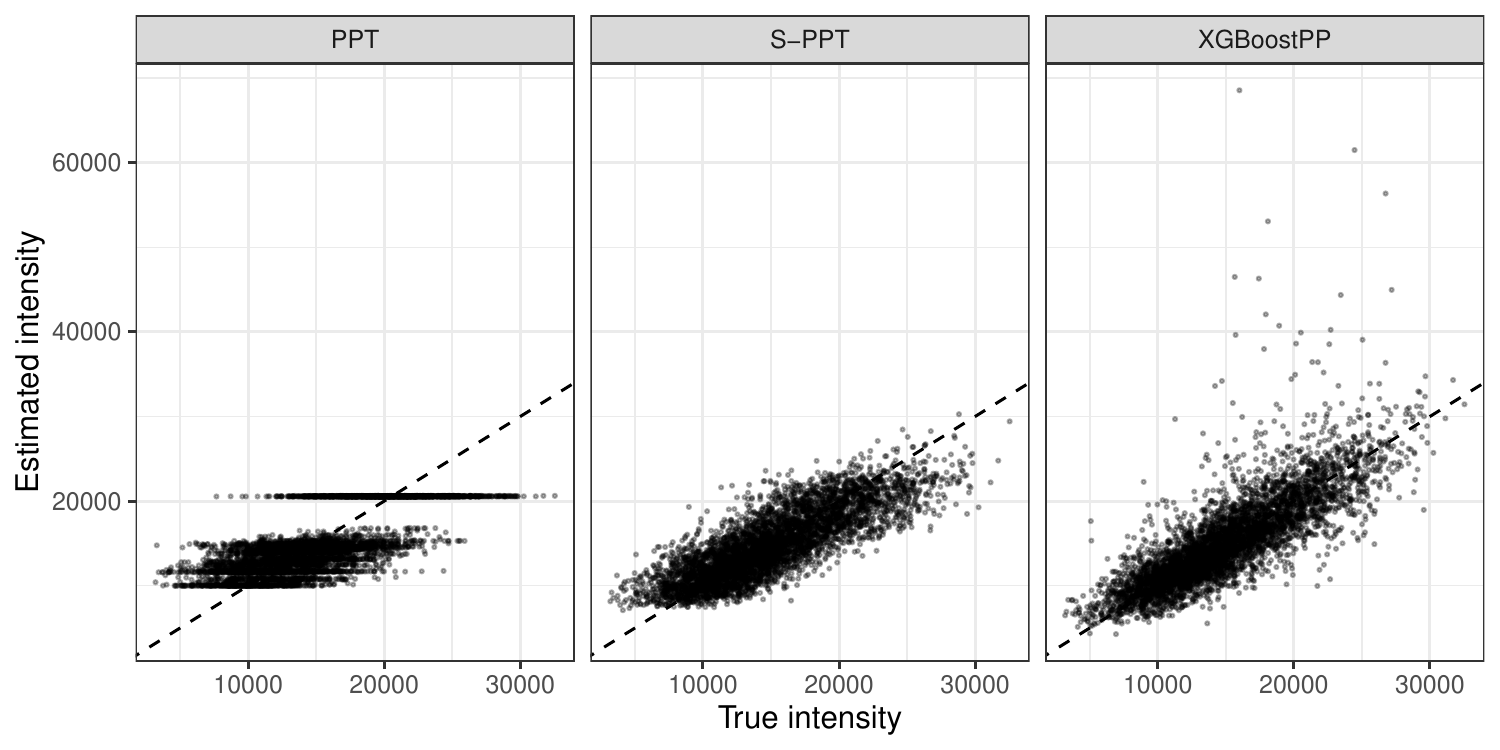}
\vspace{.3in}
\caption{10-d example scatterplot}
\label{fig: 10D example}
\end{figure}

\subsection{Real Data}

\paragraph{BCI forestry dataset:} This dataset refers to the classic tropical forest research data of tree species information collected in Barro Colorado Island (BCI), Panama which was used to study tree populations, forest dynamics, and biodiversity. This paper is focused on modeling spatial locations of a single tree species of Beilschmiedia pendula from the R dataset \texttt{bei} in the package \texttt{spatstat.data}. This dataset, originally analyzed by \citet{Moller2007}, consists of 3,604 spatial locations of tree species with 8 environmental covariates in a 1000m $\times$ 500m observation window: elevation (elev), magnitude of the elevation gradient (grad), Soil copper content (Cu), Nitrogen mineralization measure (Nmin), Soil phosphorus content (P), Soil acidity (pH), incoming solar radiation index (solar), topographical wetness index (twi). Computing time is measured in minutes. 

\paragraph{California wildfires dataset:}  There is an increasing interest in modeling wildfires and understanding how wildfire occurrence varies across space and seasons for different environmental conditions. We  consider 8,178  recorded wildfires events within the California study region during 2024 \citep{Short2026}. This dataset is obtained  from the seventh edition of the Fire Program Analysis Fire-Occurrence Database. This analysis is focused on nine covariates: easting, northing, elevation, monthly mean daily maximum temperature, monthly precipitation, monthly mean vapor-pressure deficit, monthly mean wind speed, and sine/cosine terms representing the annual seasonal cycle. These weather covariates are derived from gridMET data product.  These covariate information can be used to assess spatial and seasonal wildfire occurrence patterns and how this pattern was associated with weather and elevation  in California during 2024.   This dataset is then used to evaluate the performance under PPT, S-PPT, and XGBoostPP using a 80/20 split for training and testing. S-PPT is run with 5,000 burn-in iterations and additional 20,000 iterations for posterior inference.

\begin{table}[ht]
\caption{Performance of models on different datasets. Each dataset is randomly split into 80\% for training and 20\% for testing.} \label{table: real applications}
\begin{center}
\begin{tabular}{llll l}
Data    & \textbf{Model}  &\textbf{sRMSE} & \textbf{lppd} & Time    \\ \hline  \\
\multirow{3}{*}{Bei} & PPT         &0.340              & 4545      & 0.08    \\
                                & S-PPT      &\textbf{0.245}            & \textbf{4701}       & 6.00        \\
                                & XGBoostPP  & 0.264    & 4688            & 0.00    \vspace{12pt} \\  

\multirow{3}{*}{Wildfire}  & PPT         & 0.194  &  12192   &   2.52     \\            
                                & S-PPT     & \textbf{0.140}   &  12350   &  64.3           \\   
                     & XGBoostPP          &   0.173         &   \textbf{12583}    & 0.07       
\end{tabular}
\end{center}
\end{table}

The results in Table~\ref{table: real applications} also confirm that S-PPT model gives best sRMSE and lppd for Bei dataset but only give best sRMSE for wildfire data. One potential limitation of XGBoostPP compared to Bayesian models is that it does not provide confidence intervals for the intensity estimator but provide confidence intervals for counts; in contrast, Bayesian models provide a straightforward uncertainty assessment for any functional form of intensity function. Figure~\ref{fig: wildfire example count} shows the 18 cell counts with its confidence/credible intervals over a 5 by 5 cell regions over the study domain. For both S-PPT and XGBoostPP, their interval estimates look much better than that from PPT. In addition, Figure~\ref{fig: wildfire example} also shows the spatial map of estimated intensity functions. It is clear that this posterior mean of intensity function under S-PPT look much smoother than both XGBoostPP and PPT.   

\begin{figure}[!htbp] 
\vspace{.3in}
\centering
\includegraphics[width=\linewidth]{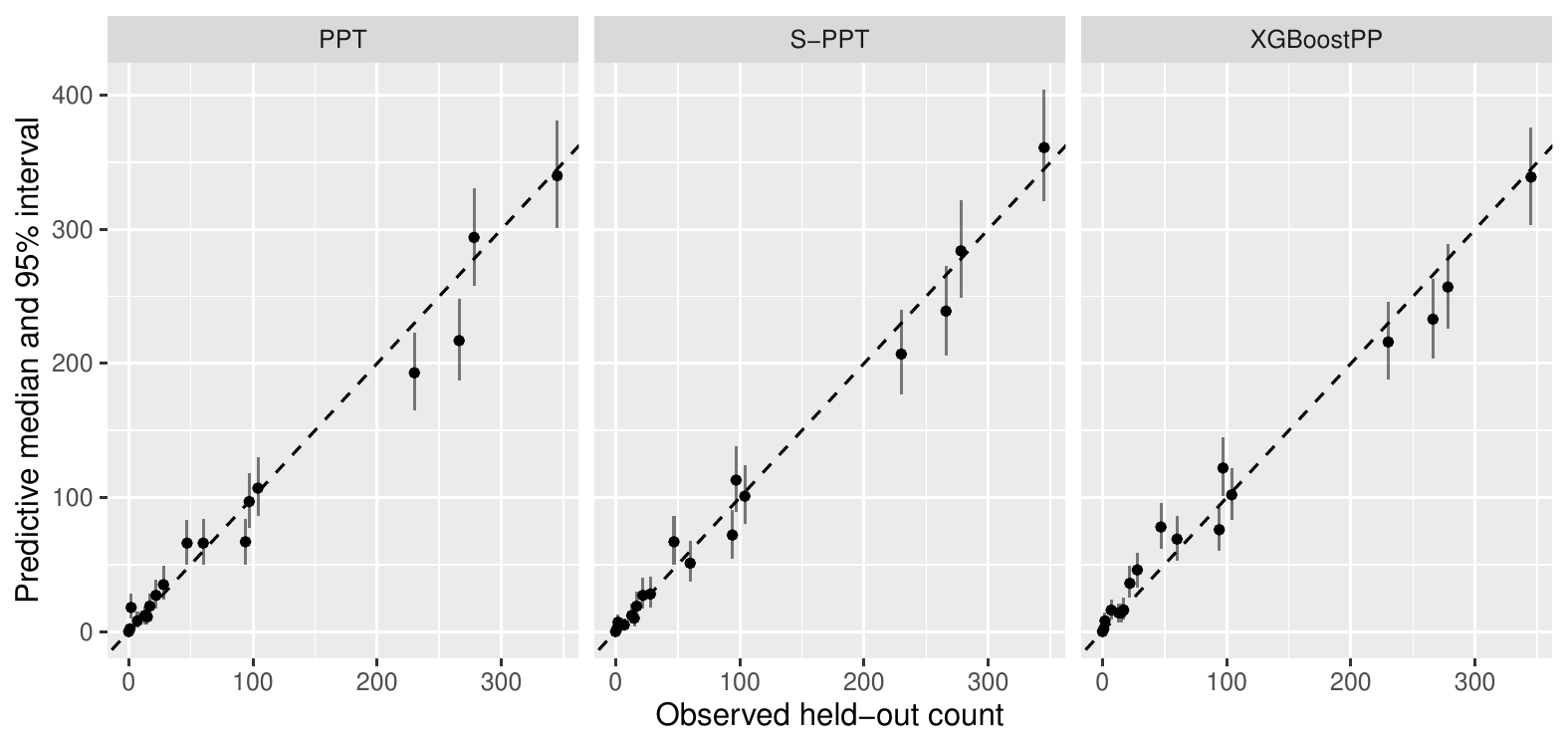}
\vspace{.3in}
\caption{Estimated count intensity function for California wildfire data under PPT, S-PPT, and XGBoostPP.}
\label{fig: wildfire example count}
\end{figure}

\section{Concluding Remarks}

This paper develops a Bayesian soft Poisson process tree (S-PPT) model for estimating intensity functions in a possibly high-dimensional covariate space. The proposed model exploits the Bayesian CART prior for learning tree partitions and integrates the soft-decision tree idea to smooth over partition boundaries. To carry out efficient Bayesian inference, this paper develops a data augmentation technique to allow closed-form conjugate sampling for intensity parameters in a partially collapsed Gibbs sampler. The theoretical result also shows strong consistency of the posterior distribution for estimating continuous intensity functions, thus providing good frequentist property of the proposed model with a slight modification on the tree prior.  

There are also several future research directions that could be pursued. First this model can be extended to an additive tree method to exploit the benefits of ensemble methods. Current computational strategy with closed-form random integral evaluation relies on axis-aligned partitions for rectangular regions. This might be relaxed if the domain is convex but not necessarily rectangular. Finally, the tree update with standard RJ-MCMC algorithm could be further improved using better proposals \citep[e.g.,][]{Pratola2016} for more efficient tree search or the sequential Monte Carlo method \citep{Ma2026PPT}.

\begin{singlespacing}
\bibliographystyle{apalike}
\bibliography{main}
\end{singlespacing}

\clearpage
\appendix

\setcounter{equation}{0}
\setcounter{page}{1}
\setcounter{table}{0}
\setcounter{section}{0}
\setcounter{figure}{0}
\renewcommand{\theequation}{S.\arabic{equation}}
\renewcommand{\thesection}{S.\arabic{section}}
\renewcommand{\thesubsection}{S.\arabic{section}.\arabic{subsection}}
\renewcommand{\thepage}{S.\arabic{page}}
\renewcommand{\thetable}{S.\arabic{table}}
\renewcommand{\thefigure}{S.\arabic{figure}}
\renewcommand{\theHfigure}{S.\arabic{figure}}

\clearpage\pagebreak\newpage

\renewcommand{\theequation}{S.\arabic{equation}}
\renewcommand{\thesubsection}{S.\arabic{section}.\arabic{subsection}}

\begin{center}
{\Large \textbf{Supplementary Material}} 
\end{center}

\section{Model Construction} \label{app: posterior computation} 

This section derives the analytic exposure integral and the allocation probabilities, and gives details of posterior computation.

\subsection{Proof of Proposition~\ref{prop: analytic exposure}}
\label{app: analytic exposure proof}
\begin{proof}
Observe that 
\[
I_v=\prod_{j=1}^d
\underbrace{\int_{a_j}^{b_j}
\prod_{\substack{u\in\downarrow\!v\\j_u=j}}
g_u(t)^{\delta_{uv}}\{1-g_u(t)\}^{1-\delta_{uv}}\,\mathrm dt}_{J_{vj}}.
\]
where an empty product reduce to $J_{vj}=W_j$. Applying a changes-of-variables technique yields 
\[
u=\exp\left\{\frac\gamma W\left(t-\frac{a_j+b_j}{2}\right)\right\},
\qquad \mathrm dt=\frac W\gamma\frac{\mathrm du}{u},
\qquad u_\pm=e^{\pm\gamma/2}.
\]
The right and left factors become $u/(u+q_k)$ and $q_k/(u+q_k)$.
Since the $q_k$ are distinct, direct calculation yields
\begin{equation}\label{eqn: rational exposure integral}
J_{vj}
=\frac W\gamma\int_{u_-}^{u_+}
\frac{C u^{r-1}}{\prod_{k=1}^m(u+q_k)}\,\mathrm du =\frac W\gamma\int_{u_-}^{u_+}
\left\{\frac{\mathds{1}_{\{r=0\}}}{u}
+\sum_{k=1}^m\frac{A_k}{u+q_k}\right\}\,\mathrm du.
\end{equation}
where 
\begin{align*}
A_k := \lim_{u \to - q_k} (u + q_k) R(u) = \lim_{u \to - q_k} \frac{Cu^{r-1}}{\prod_{\ell\neq k} (u + q_\ell)} = \frac{C(-q_k)^{r-1}}{\prod_{\ell\neq k} (q_\ell - q_k)}.
\end{align*}
Note that 
\begin{align*}
\int_{u_-}^{u_+} \frac{A_k}{u + q_k} \mathrm{d} u = A_k \log \frac{u_+q_k}{u_- + q_k}. 
\end{align*} 
Thus it follows from direct calculation and $\log(u_+/u_-)=\gamma$ that 
 \[
J_{vj}=\frac W\gamma\left\{
\gamma\mathds{1}_{\{r=0\}}+
\sum_{k=1}^m A_k\log
\frac{e^{\gamma/2}+q_k}{e^{-\gamma/2}+q_k}
\right\},
\]
as desired. 
\end{proof}

\subsection{Proof of Proposition~\ref{prop: color probability}}
\label{app: color probability proof}
\begin{proof}
It follows from Eq.~\eqref{eqn: integrated likelihood} that  terms independent of $v$ would be canceled  and thus 
\[
\Pr(z_i=v\mid z_{-i},\bfx_{1:n},T,\gamma)
\propto\phi_v(\bfx_i)\frac{h(n_v^{-i}+1,I_v)}{h(n_v^{-i},I_v)}
=\phi_v(\bfx_i)\frac{\kappa+n_v^{-i}}{\kappa/\xi+I_v}.\qedhere
\]
\end{proof}

\subsection{Tree updates}\label{app: PCGS}

Let $\mathcal G(T)$ be the splittable leaves and
$\mathcal C(T)$ the nodes with two leaf children. Let $q_v(r)$ be the
normalized admissible rule prior for $r=(j,u)$, determined by the
parent cell and fixed covariates, independently of its current rule
and labels. Define
\[
H(T,z)=\prod_{v\in\mathcal L(T)}h(n_v,I_v),\qquad
s_i(r)=\frac{\phi_{v_\lchild}(\bfx_i;r)}{\phi_v(\bfx_i)},
\]
using the complete $h$ in Eq.~\eqref{eqn: integrated likelihood};
$s_i(r)$ is evaluated under split $r$.
Choose
\[
M\sim\mathrm{Unif}\{\mathrm{GROW},\mathrm{PRUNE},\mathrm{CHANGE}\}.
\]
If the selected move has no eligible node,  $(T,z)$ is retained.

\begin{enumerate}[leftmargin=*,nosep]
\item \textbf{GROW: split one leaf.}
\begin{align*}
v&\sim\mathrm{Unif}\{\mathcal G(T)\},\qquad r^*\sim q_v,\qquad
T^*=\mathrm{Grow}(T,v,r^*),\\
z_i^*&\sim\mathrm{Cat}_{(v_\lchild,v_\rchild)}
\{s_i(r^*),1-s_i(r^*)\},\qquad i:z_i=v.
\end{align*}
\item \textbf{PRUNE: merge two leaf children.}
\[
v\sim\mathrm{Unif}\{\mathcal C(T)\},\qquad
T^*=\mathrm{Prune}(T,v),\qquad z_i^*=v,\qquad i:z_i\in\{v_\lchild,v_\rchild\}.
\]
\item \textbf{CHANGE: replace one split rule.}
\begin{align*}
v&\sim\mathrm{Unif}\{\mathcal C(T)\},\qquad r^*\sim q_v,\qquad
T^*=\mathrm{Change}(T,v,r^*),\\
z_i^*&\sim\mathrm{Cat}_{(v_\lchild,v_\rchild)}
\{s_i(r^*),1-s_i(r^*)\},\qquad
 i:z_i\in\{v_\lchild,v_\rchild\}.
\end{align*}
\end{enumerate}

The routing factors cancel against the label-proposal probabilities.
A proposed value  $(T^*,z^*)$ is accepted with probability
\[
\alpha=\min\left\{1,
\frac{\pi(T^*)H(T^*,z^*)}{\pi(T)H(T,z)}R_M\right\},
\]
where 
$
R_M=\begin{cases}
\dfrac{|\mathcal G(T)|}{|\mathcal C(T^*)|q_v(r^*)},&M=\mathrm{GROW},\\
\dfrac{|\mathcal C(T)|q_v(r)}{|\mathcal G(T^*)|},&M=\mathrm{PRUNE},\\
\dfrac{q_v(r)}{q_v(r^*)},&M=\mathrm{CHANGE}.
\end{cases}
$

Here $r$ is the current or removed rule. On rejection,  $(T,z)$ is retained.

\subsection{Gate update}\label{app: gate update}

Fix $T,\bflambda$ and set $\eta=\log\gamma$ componentwise. With the
shape--rate Gamma prior, the log target, including the Jacobian, is
\[
\ell(\eta)=\log p(\bfx_{1:n}\mid\bflambda,e^\eta,T)
            +\sum_{j=1}^d(a_0\eta_j-b_0e^{\eta_j}).
\]
At iteration $t=1,2,\ldots$, propose and accept with probability
\[
u\sim N_d(0,I_d),\qquad \eta^*=\eta+L_tu,\qquad
\alpha_{\gamma,t}=\min\{1,e^{\ell(\eta^*)-\ell(\eta)}\},
\]
where $L_tL_t^\top$ is the proposal covariance. On acceptance, set
$\eta=\eta^*$; otherwise retain $\eta$. Set $\gamma=e^\eta$ and
restore $z$ as in the main scan.
The stated prior permits every $\gamma_j>0$. Thus, any additional lower
bound is a prior truncation and proposals outside it are rejected.

During warm-up, robust adaptive Metropolis \citep{Vihola2012} uses
\[
L_{t+1}L_{t+1}^\top
=L_t\left[I_d+\omega_t(\alpha_{\gamma,t}-0.234)
                  \frac{uu^\top}{\|u\|^2}\right]L_t^\top,
\qquad \omega_t=\min\{1,dt^{-0.7}\}.
\]
Initialize $L_1=\operatorname{diag}(s_1,\ldots,s_d)$ with proposal
standard deviations $s_j>0$. Freeze $L_t$ after
$t_{\rm adapt}\le t_{\rm burn}$;
the implementation default is
$t_{\rm adapt}=\lfloor0.8t_{\rm burn}\rfloor$.
If a Cholesky update fails numerically,  the previous factor is retained.

\section{Numerical Examples}

\subsection{Performance Metrics} \label{app: metrics} 
Let $\bfx_{\mathrm{train}}$ and $\bfx_{\mathrm{test}}$ denote the observed training and test patterns. Let $\bfx_1^*,\ldots,\bfx_M^*\in D$ be evaluation sites and $\theta=(T,\bflambda,\gamma)$. Write $\widehat\lambda$ for the posterior mean intensity. The metrics are 
\[
\operatorname{sRMSE}
=
\frac{
\sqrt{\frac{1}{M}\sum_{m=1}^{M}
\left\{\widehat{\lambda}(\bfx_m^*)-\lambda_0(\bfx_m^*)\right\}^{2}}
}{
\frac{1}{M}\sum_{m=1}^{M}\lambda_0(\bfx_m^*)
}.
\]

\[
\begin{aligned}
\operatorname{lppd}
&=
\log \int
p(\bfx_{\mathrm{test}}\mid\theta)\,
p(\theta\mid \bfx_{\mathrm{train}})\,\mathrm{d}\theta \\
&=
\log \int
\exp\!\left\{
\sum_{\bfx\in \bfx_{\mathrm{test}}}\log\lambda_\theta(\bfx)
-
\int_D\lambda_\theta(\bfx)\, \mathrm{d}\bfx
\right\}
p(\theta\mid \bfx_{\mathrm{train}})\,\mathrm{d}\theta .
\end{aligned}
\]

\subsection{2-d Example} \label{app: 2D Example} 

The true intensity is of the form
$
{\lambda_0(\bfx)=C_0 f(\bfx),\, \bfx\in[0,1]^2.}
$
where $f(\cdot)$ is a multi-scale function varying in space and $C_0$ is chosen to normalize the intensity so that the expected number of counts is 2500 in the simulation example. 

In particular, the function $f$ is chosen according to the gating function structures at multiple scales of the form 
\begin{align*}
f(\bfx) &=b+\sum_{s=1}^{2}\sum_{k=1}^{5}
       (\beta_{s,k}-\beta_{s,k-1})\prod_{h=1}^{k}\ell_{s,h}(\bfx)
\end{align*} 
where   
$
\ell_{s,k}(\bfx)
 =\left[1+\exp\{-(2\delta_{s,k}-1)\gamma_{j_k}
                    (x_{j_k}-c_{s,k})\}\right]^{-1}
$, $\delta_{s,k}=0$ or $1$ selects the left or right child, and $j_k, c_{s,k}$ are cutting axes and cut locations.  $(\gamma_1,\gamma_2)=(78,48)$. Then $C_0$ is chosen as  
\[
C_0=\frac{n_0}{\displaystyle m^{-2}
 \sum_{i=1}^{m}\sum_{j=1}^{m}
 f\!\left(\frac{i-1/2}{m},\frac{j-1/2}{m}\right)},
\qquad n_0=2500,\quad m=500,
\]
so that $\int_D\lambda_0(\bfx)\,\mathrm d\bfx\approx2500$.

This function $f$ represents a multiscale structure through $w_{s,k}:=\beta_{s,k}-\beta_{s,k-1}$ and tree partitions, where 
\begin{gather*}
\beta_{1,0}=\beta_{2,0}=0.3,\\
(\beta_{1,1},\ldots,\beta_{1,5})=(0.80,1.70,3.25,5.95,11.15),\\
(\beta_{2,1},\ldots,\beta_{2,5})=(0.50,0.90,1.65,3.20,7.40).
\end{gather*}

For the tree partitions, the cut axis is chosen as $j_k=1+((k-1)\bmod2), k=1,\ldots, 5$. Let $A_{s,0}=D=[0,1]^2$ and $A_{s,k-1}=\prod_{j=1}^2 [a_{s,k-1,j}, b_{s, k-1,j}]$. The recursive formulas for cut location $c_{s,k}$ and $A_{s,k}$ are given by
\[
c_{s,k}
 =a_{s,k-1,j_k}+U_{s,k}(b_{s,k-1,j_k}-a_{s,k-1,j_k}),\qquad A_{s,k}
 =A_{s,k-1}\cap
 \{\bfx:(2\delta_{s,k}-1)(x_{j_k}-c_{s,k})\ge0\}
\]
where 
\begin{gather*}
(u_1,\ldots,u_5)=(0.42,0.58,0.47,0.55,0.38),\\
(U_{1,1},\ldots,U_{1,5})=(u_1,u_3,u_2,u_3,u_5),\qquad
(\delta_{1,1},\ldots,\delta_{1,5})=(0,1,0,1,1),\\
(U_{2,1},\ldots,U_{2,5})=(u_1,u_4,u_4,u_3,u_5),\qquad
(\delta_{2,1},\ldots,\delta_{2,5})=(1,1,1,1,1),
\end{gather*}

\begin{figure}[ht] 
\vspace{.3in}
\centering
\includegraphics[width=\linewidth, height=0.25\textheight]{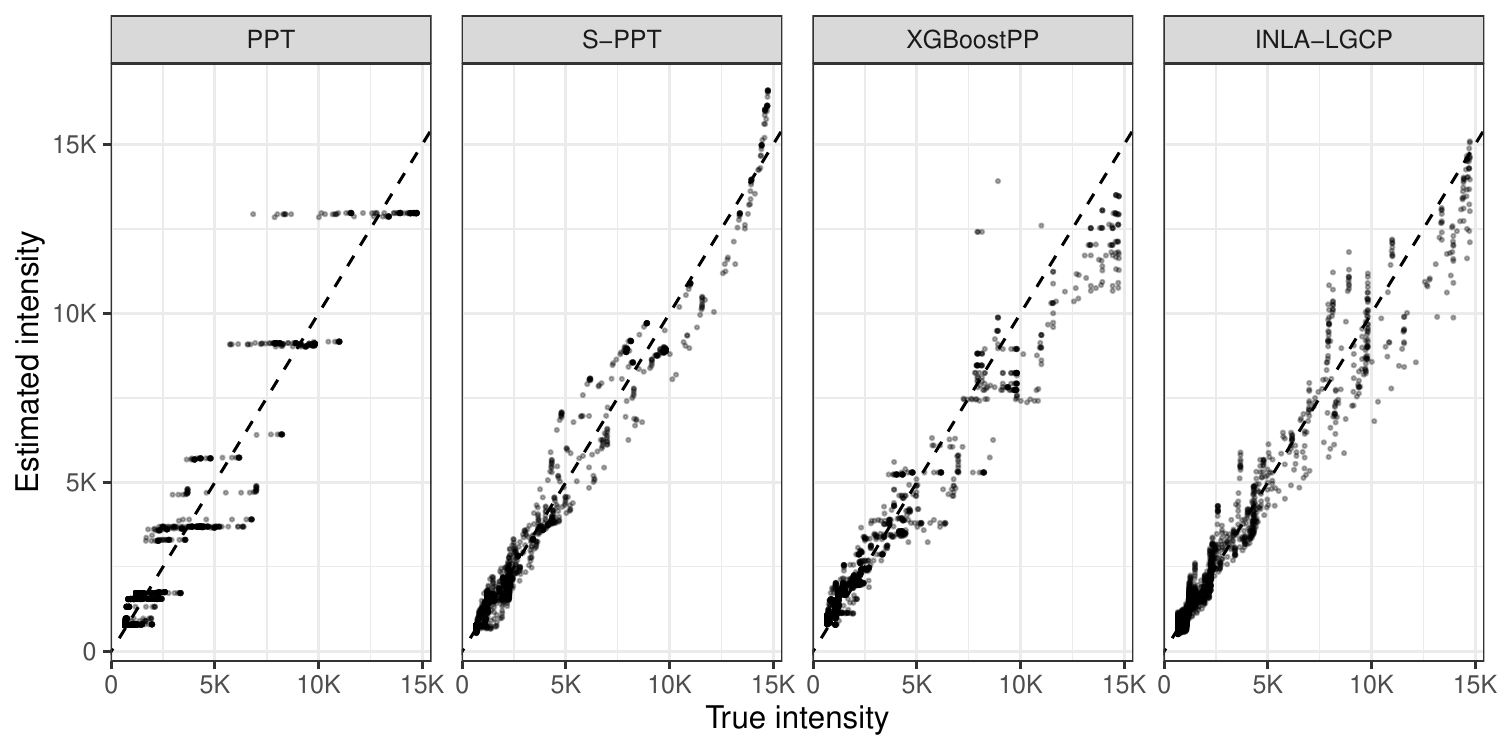}
\vspace{.3in}
\caption{2-d example scatterplot}
\label{fig: 2D scatter}
\end{figure}

\newpage
\subsection{10-d Example} \label{app: 10d example}

For $\bfx=(x_1,\ldots,x_{10}) \in [0,1]^{10}$, the true intensity function is of the form 
$
  \lambda_0(\bfx) = n_0\frac{f_{10}(\bfx)}{Z}, 
$
where $ Z=\int_Df_{10}(\bfx)\,d\bfx$, $n_0=15{,}000$, and 
$
  f_{10}(\bfx) =0.15+0.3\sum_{j=1}^{5}q_j(x_j) +0.6q_{67}(x_6,x_7)+0.6q_{78}(x_7,x_8).
$

The component functions are
\begin{align*}
q_1(u)
  &=\begin{cases}
       0,    &0\leq u<0.25,\\
       1,    &0.25\leq u<0.65,\\
       0.45, &0.65\leq u\leq1,
     \end{cases}
     \\[0.3em]
q_2(u)&=\frac{1+\sin(2\pi u)}{2},
  \qquad q_3(u)=\frac{1+\cos(3\pi u)}{2},
    \\
q_4(u)&=\frac{|u-0.3|}{0.7},
  \qquad q_5(u)=16u^2(1-u)^2,
    \\
q_{67}(u,v)&=\mathds{1}_{\{u\geq1/2\}}\mathds{1}_{\{v\geq1/2\}},
  \qquad q_{78}(u,v)=\sin^2(\pi u)\sin^2(\pi v),
\end{align*}
for $u,v\in[0,1]$.

\begin{figure}[ht] 
\vspace{.3in}
\centering
\includegraphics[width=\linewidth]{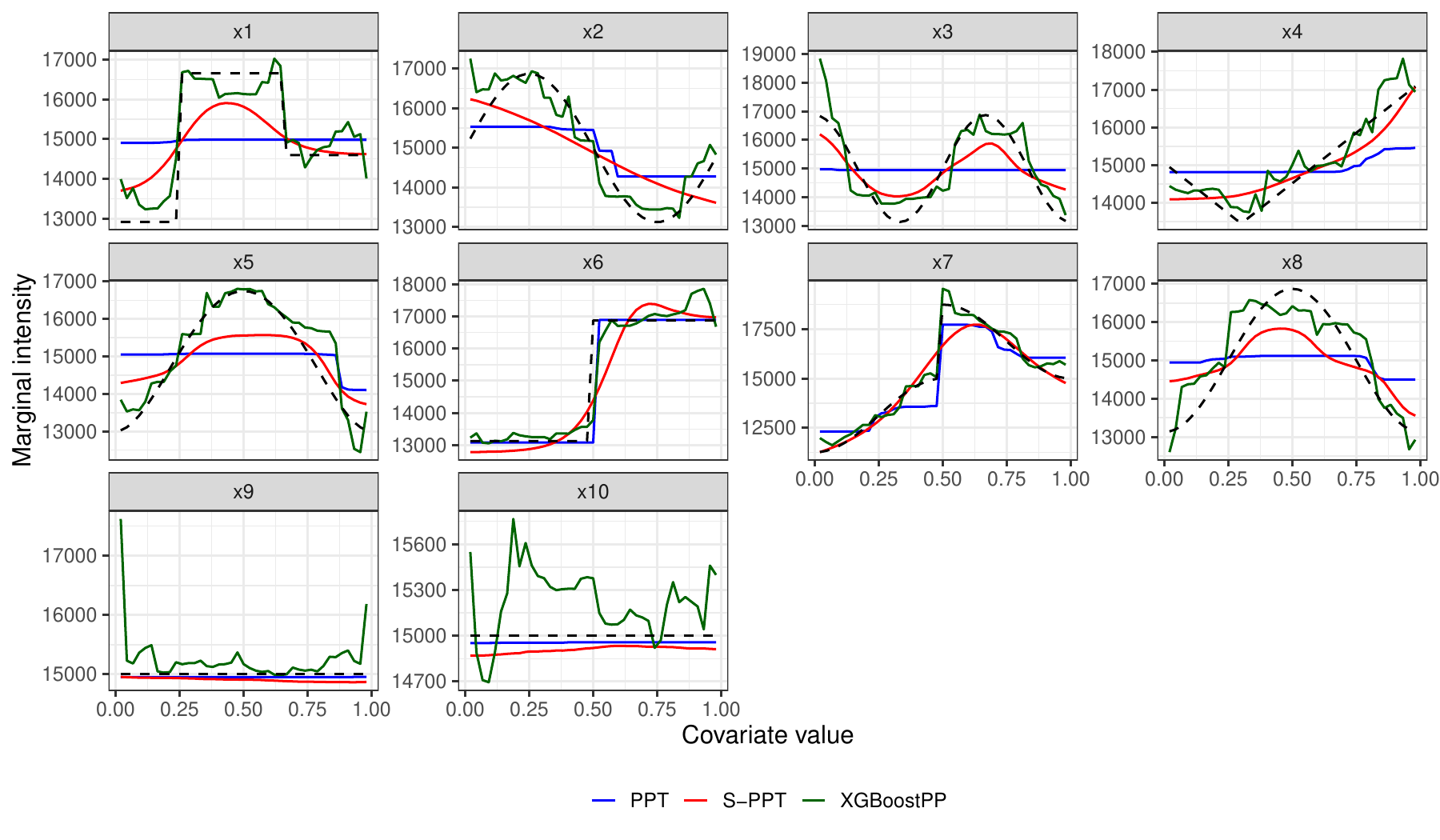}
\vspace{.3in}
\caption{Estimation of marginal relationships between intensity function and each variable. }
\label{fig: 10D example marginal plot}
\end{figure}

\newpage
\subsection{Real Application}

\begin{figure}[!htbp] 
\vspace{.3in}
\centering
\includegraphics[width=\textwidth, height=0.4\textheight]{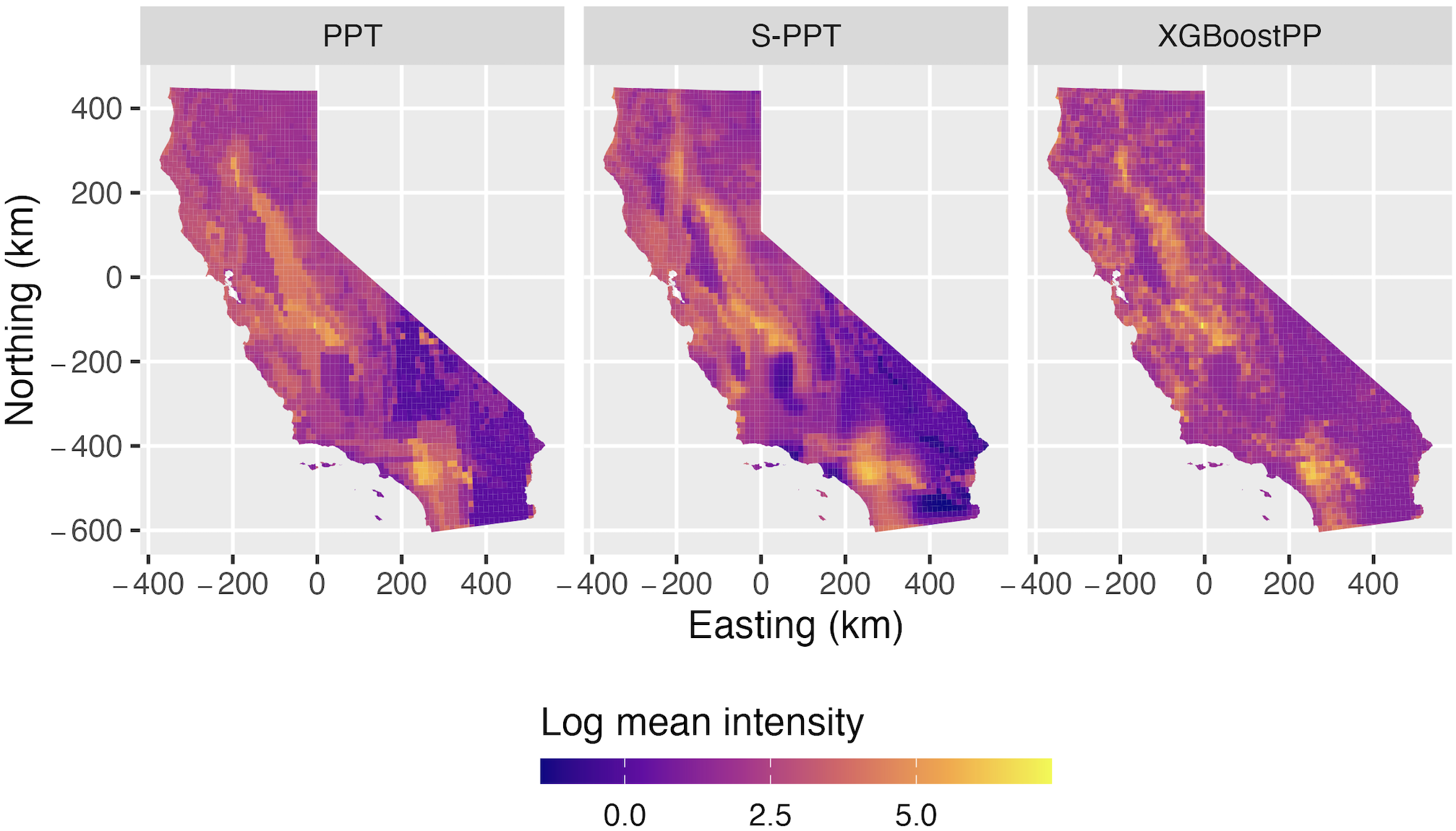}
\vspace{.3in}
\caption{Spatial map of log of estimated intensity function for California wildfire data under PPT, S-PPT, and XGBoostPP.}
\label{fig: wildfire example}
\end{figure}
\pagebreak

\section{Proof of Posterior Consistency}
\subsection{Verification of Condition C1}\label{app: C1}

As in Section~\ref{sec: theory}, let $D=[0,1]^d$ and write $V=\mu(D)$.
For functions, $\|\cdot\|_p$ denotes the $L^p(\mu)$ norm, including
the essential supremum for $p=\infty$; for vectors, $\|\cdot\|_\infty$
denotes the maximum norm.

\begin{lemma}[Kullback--Leibler support]\label{lem: KL support}
Under Assumptions~\ref{ass: truth}, \ref{ass: tree prior}, and
\ref{ass: hyperparameters},
\[
 \Pi\{\lambda:K(p_0;p_\lambda)<\epsilon\}>0
 \qquad(\epsilon>0).
\]
\end{lemma}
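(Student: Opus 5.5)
The plan is to exhibit a fixed tree $T^\star$ with positive prior probability. Around it we take an open set of leaf heights and gate slopes, also of positive prior mass, on which $\|\lambda-\lambda_0\|_\infty$ is small. Since $\lambda_0\in[\lambda_-,\lambda_+]$, a uniform bound turns into a KL bound. For $\lambda\ge\lambda_-/2$ we write $K(p_0;p_\lambda)=\int_D\{\lambda_0\log(\lambda_0/\lambda)-\lambda_0+\lambda\}\,\mathrm d\mu$. The integrand is $\lambda_0\,\psi(\lambda/\lambda_0)$ with $\psi(t)=t-1-\log t\le C(t-1)^2$ for $t\in[1/2,2]$. This gives $K(p_0;p_\lambda)\le C V\|\lambda-\lambda_0\|_\infty^2/\lambda_-$. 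It therefore suffices to show that for every $\eta>0$,
\[
\Pi\{\lambda:\|\lambda-\lambda_0\|_\infty<\eta\}>0 .
\]

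\textbf{Construction of $T^\star$.} By uniform continuity of $\lambda_0$ on the compact $D$, choose $m$ such that $|\lambda_0(\bfx)-\lambda_0(\bfy)|<\eta/4$ whenever $\|\bfx-\bfy\|_\infty\le 2^{-m}$. Let $T^\star$ be the complete dyadic tree of depth $md$ that splits coordinates cyclically at midpoints. Its leaves are the $2^{md}$ dyadic cubes $Q_v$ of side $2^{-m}$. Cuts at exactly $1/2$ may be excluded when $F_U$ is continuous, so we instead take the set $\mathcal T_e$ of trees with the same topology and axes whose relative cuts lie in $(1/2-e,1/2+e)$. Under Assumption~\ref{ass: tree prior}, $\mathcal T_e$ has prior mass at least
\[
\prod_{\text{internal}}\alpha_0^{h+1}\zeta_{j}\,F_U\big((1/2-e,1/2+e)\big)\prod_{\text{leaves}}\big(1-\alpha_0^{h+1}\big)>0 .
\]
For small $e$ every leaf cell of a tree in $\mathcal T_e$ has diameter at most $2\cdot 2^{-m}$ in sup norm. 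Set $\lambda_v^\star:=\lambda_0(\text{center of }Q_v)$. The Gamma prior gives positive mass to $\prod_v(\lambda_v^\star-\eta/4,\lambda_v^\star+\eta/4)$, and this mass is independent of the tree.

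\textbf{Main obstacle: controlling the soft gates.} This step uses large $\gamma$; the difficulty is that the gates are never hard, so near cut boundaries $\phi_v$ need not be close to $\mathds 1_{Q_v}$. We avoid any hard-limit argument. Since $\sum_v\phi_v\equiv1$,
\[
\lambda(\bfx)-\lambda_0(\bfx)=\sum_v\phi_v(\bfx)\{\lambda_v-\lambda_0(\bfx)\}.
\]
Leaves whose cells lie within sup-distance $2^{-m}$ of $\bfx$ contribute at most $\eta/4+\eta/4+\eta/4$ to this sum. Leaves $v$ farther away have some ancestor cut $c_u$ on coordinate $j_u$ with $|x_{j_u}-c_u|\ge 2^{-m}$, $\bfx$ lying on the wrong side. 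For such a leaf, $\phi_v(\bfx)\le\{1+\exp(\gamma_{j_u}2^{-m})\}^{-1}$. There are at most $2^{md}$ such leaves and $|\lambda_v-\lambda_0|\le 2\lambda_++\eta$, so their contribution is below $\eta/4$ once every $\gamma_j>\Gamma_0$ for a threshold $\Gamma_0=\Gamma_0(m,d,\eta,\lambda_+)$. The subtle point is the claim that every non-neighboring leaf is separated by a cut at distance at least $2^{-m}$. We justify it as follows: in a dyadic-like partition, two cells that are not adjacent are separated along some coordinate by a full intermediate cell, and that cell has width at least $2^{-m}(1-2e)^{\cdot}$. Shrinking the neighborhood radius by a constant handles the $e$-perturbation. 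Finally, Assumption~\ref{ass: hyperparameters} gives $\pi\big((\Gamma_0,\infty)^d\big)>0$.

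The three events (tree in $\mathcal T_e$, $\bflambda$ in the box, $\gamma$ large) are independent or conditionally independent under the prior. Their joint mass is therefore positive, and on this event $\|\lambda-\lambda_0\|_\infty<\eta$. Choosing $\eta$ with $CV\eta^2/\lambda_-<\epsilon$ and $\eta<\lambda_-/2$ completes the proof.
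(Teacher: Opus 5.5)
Your argument is correct in substance and follows the paper's skeleton. Both proofs reduce KL to a sup-norm bound by a quadratic estimate: you use $t-1-\log t\le 2(t-1)^2$, the paper uses the bound $\int(\lambda_0-\lambda)^2/\lambda\,\mathrm d\mu$. Both use the complete depth-$md$ tree with cyclic axes and (near-)midpoint cuts. Both split $\lambda-\lambda_0=\sum_v\phi_v(\lambda_v-\lambda_0)$ into near and far leaves, with an $e^{-\gamma 2^{-m}}$ bound on the far ones, and both compute the prior mass as a product.

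Where you genuinely differ is in how the approximation is transferred to a set of positive prior mass. The paper first builds an exact approximant $\lambda_{r,G}$ with cut fractions exactly $1/2$ and one slope $G$. It then uses uniform continuity of $(\theta,\bfx)\mapsto\lambda_\theta(\bfx)$ on a compact $\mathcal K\times D$ to obtain a small box $\|\theta-\theta^0\|_\infty<\delta$. You instead prove the bound directly and uniformly over all cut fractions in $(1/2-e,1/2+e)$ and all $\gamma\in(\Gamma_0,\infty)^d$. This works because the far-leaf bound only improves as $\gamma$ grows, and the near-leaf bound does not involve $\gamma$. It avoids the compactness step and gives an explicit, unbounded $\gamma$-set. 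The price is that you must track the perturbed cells yourself.

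For that tracking, your diameter bound $2\cdot2^{-m}$ is not what your three-$\eta/4$ accounting needs. You need every point of the perturbed cell $A_v$ to lie within $2^{-m}$ of the centre of $Q_v$. This holds because at the $i$th split on a coordinate $|c-c^0|\le e2^{-i}+\Delta_i$, so endpoints move by at most $\sum_{i\ge0}e2^{-i}=2e$. Then $e\le 2^{-m-2}$ suffices.

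The one step whose justification you should replace is the far-leaf claim you flag as subtle. Non-adjacent cells separated by ``full intermediate cells'' is the wrong notion, for two reasons. First, the perturbed partition is not a grid. Second, and more importantly, a separating cut from another branch gives no control over $\phi_v(\bfx)$, which is a product over the ancestors of $v$ only.

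The correct argument is short and needs no shrinking of radii. The leaf box $A_v=\prod_j[a_{vj},b_{vj}]$ has each face either on $\partial D$ or at an ancestral cut of $v$: lower faces have $\delta_{uv}=1$ and upper faces have $\delta_{uv}=0$. If $\operatorname{dist}_\infty(\bfx,A_v)>2^{-m}$, some coordinate has $x_j<a_{vj}-2^{-m}$ or $x_j>b_{vj}+2^{-m}$. Since $x_j\in[0,1]$, that face is not on $\partial D$, so it is an ancestral cut $c_u$. Hence $\bfx$ lies on the wrong side of $c_u$ by more than $2^{-m}$, and $\phi_v(\bfx)\le\{1+\exp(\gamma_{j_u}2^{-m})\}^{-1}$. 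This is exactly the paper's ``distant representatives'' step, and it holds verbatim for every tree in $\mathcal T_e$.
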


\begin{proof}
We construct a finite-tree approximation and a parameter neighborhood
of positive prior probability that preserves it; see
\citet[Example~7.6]{Ghosal2017} and
\citet[Appendix~B, (8)]{Linero2018}. The approximation below is specific
to the S-PPT gates.

\medskip
\noindent\emph{Step 1. Reduce KL support to uniform approximation.}
Fix $\epsilon>0$ and choose
\[
 0<e<\min\left\{\frac{\lambda_-}{2},
                   \sqrt{\frac{\epsilon\lambda_-}{2V}}\right\}.
\]
If $\|\lambda-\lambda_0\|_\infty<e$, then
$\lambda>\lambda_--e>\lambda_-/2$ $\mu$-almost everywhere.

Since $(V\lambda_0)(\mu/V)=\lambda_0\mu$ and
$(V\lambda)(\mu/V)=\lambda\mu$, An application of Appendix~A of \citet{Gugushvili2013} with reference measure $\mu/V$ and intensities $V\lambda_0$ and $V\lambda$ yields that 
\begin{equation}\label{eqn: uniform approximation}
 K(p_0;p_\lambda)
 \le\int_D\frac{(\lambda_0-\lambda)^2}{\lambda}\,\mathrm d\mu
 \le\frac{V\|\lambda-\lambda_0\|_\infty^2}{\lambda_--e}
 <\frac{2Ve^2}{\lambda_-}<\epsilon,
\end{equation}
where the remaining inequalities follow from
$\|\lambda-\lambda_0\|_\infty<e$, $\lambda\ge\lambda_--e$,
and the choice of $e$.

\medskip
\noindent\emph{Step 2. Construct a finite-tree approximation.}
Let
\[
 \omega_0(t)=
 \sup_{\substack{\bfx,\bfy\in D\\
                 \|\bfx-\bfy\|_\infty\le t}}
 |\lambda_0(\bfx)-\lambda_0(\bfy)|.
\]
It follows from the continuity of $\lambda_0$ on compact $D$ that $\omega_0(t)\to0$
as $t\downarrow0$.

\smallskip
\noindent\emph{(a) Tree and leaf values.}
For $r\ge1$, let $T_r$ be the complete binary tree of depth $dr$, with
\[
 j_u=1+(\operatorname{depth}(u)\bmod d),\qquad U_u^0=1/2,
 \qquad u\in\mathcal I(T_r).
\]
Write the box at node $v$ as
$\overline A_v^0=\prod_j[a_{vj}^0,b_{vj}^0]$ and set
$h=2^{-r}$, $J=2^{dr}$. Every leaf path has $r$ midpoint splits per
coordinate, so there are $J$ leaves, each with side length $h$.
Choose $\bfx_v\in\overline A_v^0$ for each leaf and set
$\lambda_v^0=\lambda_0(\bfx_v)$ and $\gamma_j^0=G>0$.
For $u\in\downarrow\!v$, let $\delta_{uv}$ be the right-turn indicator defined above. Then
\begin{gather*}
 c_u^0=(a_{u,j_u}^0+b_{u,j_u}^0)/2,\qquad
 \phi_v^0(\bfx)=
 \prod_{u\in\downarrow\!v}
 \left\{1+\exp\left[-(2\delta_{uv}-1)G(x_{j_u}-c_u^0)\right]\right\}^{-1},\\
 \sum_{v\in\mathcal L(T_r)}\phi_v^0=1,\qquad
 \lambda_{r,G}
 =\sum_{v\in\mathcal L(T_r)}\lambda_0(\bfx_v)\phi_v^0.
\end{gather*}

\smallskip
\noindent\emph{(b) Distant representatives.}
The ancestral cuts give
\begin{align*}
 a_{vj}^0
 &=\max\bigl(\{0\}\cup
     \{c_u^0:u\in\downarrow\!v,\ j_u=j,\ \delta_{uv}=1\}\bigr),\\
 b_{vj}^0
 &=\min\bigl(\{1\}\cup
     \{c_u^0:u\in\downarrow\!v,\ j_u=j,\ \delta_{uv}=0\}\bigr).
\end{align*}
If $\|\bfx-\bfx_v\|_\infty>2h$, some coordinate $j$ satisfies
\[
 x_j<x_{vj}-2h\le a_{vj}^0-h
 \quad\text{or}\quad
 x_j>x_{vj}+2h\ge b_{vj}^0+h.
\]
The relevant endpoint lies inside $(0,1)$ and is an ancestral cut.
Thus some $u\in\downarrow\!v$ satisfies
$(2\delta_{uv}-1)(x_{j_u}-c_u^0)<-h$. Since all routing factors lie in $[0,1]$,
\[
 \phi_v^0(\bfx)
 \le\exp\{(2\delta_{uv}-1)G(x_{j_u}-c_u^0)\}
 \le e^{-Gh}.
\]

\smallskip
\noindent\emph{(c) Total approximation error.}
By Assumption~\ref{ass: truth} and the partition of unity,
\begin{align*}
 |\lambda_{r,G}(\bfx)-\lambda_0(\bfx)|
 &\le\sum_v\phi_v^0(\bfx)
       |\lambda_0(\bfx_v)-\lambda_0(\bfx)|\\
 &\le\omega_0(2h)+2\lambda_+
       \sum_{\|\bfx-\bfx_v\|_\infty>2h}\phi_v^0(\bfx)\\
 &\le\omega_0(2h)+2\lambda_+Je^{-Gh}.
\end{align*}
Choose $r$ with $\omega_0(2^{1-r})<e/4$ and
$G>\max\{1,h^{-1}\log(8\lambda_+J/e)\}$. Then
\[
 \sup_{\bfx\in D}
 |\lambda_{r,G}(\bfx)-\lambda_0(\bfx)|<e/2.
\]

\medskip
\noindent\emph{Step 3. Preserve the approximation in a parameter neighborhood.}
Keep $T_r$ and its axes fixed, and write
\[
 \theta=
 \bigl((U_v)_{v\in\mathcal I(T_r)},\bflambda,\gamma\bigr),\qquad
 \theta^0=((1/2)_v,(\lambda_v^0)_v,(G)_j).
\]
Choose $0<\delta_0<\min\{1/4,G/2,\lambda_-/2\}$ and let
\[
 \mathcal K=
 \{\theta:\|\theta-\theta^0\|_\infty\le\delta_0\}.
\]
On $\mathcal K$, every cut fraction lies in $[1/4,3/4]$, so cuts
remain strictly inside their node intervals. The recursion
\[
 c_v=(1-U_v)a_{v,j_v}+U_vb_{v,j_v}
\]
ensures the continuity of the endpoints and cuts in the cut fractions.
The root-width gate predictor $\gamma_{j_v}(x_{j_v}-c_v)$ is therefore
continuous in $(\theta,\bfx)$. Hence $(\theta,\bfx)\mapsto\lambda_\theta(\bfx)$
is uniformly continuous on $\mathcal K\times D$.

Choose $0<\delta<\delta_0$ such that
\[
 \sup_{\|\theta-\theta^0\|_\infty<\delta}
 \sup_{\bfx\in D}
 |\lambda_\theta(\bfx)-\lambda_{r,G}(\bfx)|<e/2.
\]

It follows from the triangle inequality and Steps~1--2 that
\[
 \|\theta-\theta^0\|_\infty<\delta
 \quad\Longrightarrow\quad
 K(p_0;p_{\lambda_\theta})<\epsilon.
\]

\medskip
\noindent\emph{Step 4. Verify positive prior probability.}
Let $\mathcal U_\delta$ fix the selected topology and axes and impose
$\|\theta-\theta^0\|_\infty<\delta$.
Let $f_*$ denote the $\mathrm{Ga}(\kappa,\kappa/\xi)$ density. Then
\begin{align*}
 \pi(\mathcal U_\delta)
 &=\prod_{u\in\mathcal I(T_r)}\rho(u)\zeta_{j_u}
   \prod_{v\in\mathcal L(T_r)}(1-\rho(v))
   F_U((1/2-\delta,1/2+\delta))^{J-1}\\
 &\quad\times
   \int_{(G-\delta,G+\delta)^d}\pi(\gamma)\,\mathrm d\gamma
   \prod_{v\in\mathcal L(T_r)}
   \int_{\lambda_v^0-\delta}^{\lambda_v^0+\delta}
                         f_*(t)\,\mathrm dt
 >0.
\end{align*}
By Assumption~\ref{ass: tree prior}, the tree and cut factors are positive.
By Assumption~\ref{ass: hyperparameters}, the integrals are positive
because their densities are continuous and strictly positive on the
corresponding compact boxes. Therefore,
\[
 0<\pi(\mathcal U_\delta)
 \le\Pi\{\lambda:K(p_0;p_\lambda)<\epsilon\}.
\]
\end{proof}

\subsection{Sieve construction}\label{app: sieve construction}

We specify one sieve sequence and verify C2--C3 by following the
sieve and finite-cover testing strategy in
\citet[proof of Theorem~6.23]{Ghosal2017}; see also
\citet[Section~7]{Rockova2019} and
\citet[Supplement, Section~S.2]{Linero2018}.

Define:
\[
 G_{c,\tau}(y)=\{1+e^{-(y-c)/\tau}\}^{-1},\qquad
 g_v(\bfx)=G_{c_v,\tau_v}(x_{j_v}),\qquad
 c_v\in[0,1],\quad\tau_v>0.
\]
For a topology $T$ with $L$ leaves and axes $\boldsymbol j$, let
$\phi_v^g$ denote the routing weights determined by its gates $g$.
For a leaf-count bound $J\in\mathbb N$ and a leaf-intensity bound
$B>0$, define
\[
\mathcal E_{J,B}
 =\bigcup_{(T,\boldsymbol j):\,|\mathcal L(T)|\le J}
 \left\{\sum_{v\in\mathcal L(T)}\lambda_v\phi_v^g:
   \bflambda\in[0,B]^L,\ c\in[0,1]^{L-1},\
   \tau\in(0,\infty)^{L-1}\right\},\qquad \mathcal F_{J,B}=\overline{\mathcal E_{J,B}}^{\,L^1(\mu)}.
\]
Here the bar denotes closure in $L^1(\mu)$. For the $k$th sieve,
we choose the bounds $J_k$ and $B_k$ as follows. Fix $\epsilon>0$
for C2--C3 and set
\[
\Delta=1-e^{-\epsilon^2/2},\qquad \delta=\Delta^2/128,\quad J_k=\max\left\{
   1,\left\lfloor\frac{\delta k}{\log(k+1)}\right\rfloor
 \right\},\qquad
 B_k=\frac{2\xi k}{\kappa},\qquad
 \mathcal F_k=\mathcal F_{J_k,B_k}.
\]
Since $J_k=\delta k/\log(k+1)+O(1)$, we have 
\begin{equation}\label{eqn: sievelimits}
 \frac{J_k\log k}{k}\to\delta,\qquad
 \frac{J_k\log J_k}{k}\to\delta,\qquad
 \frac{J_k}{k}\to0.
\end{equation}

\subsection{Verification of Condition C2}\label{app: C2}

\begin{lemma}[Tail bound for the number of leaves]\label{lem: tree tail}
Under Assumption~\ref{ass: tree prior}, $T$ is finite $\pi$-almost surely
and, for constants $C_T,c_T>0$,
\[
 \pi\{|\mathcal L(T)|>J\}\le C_Te^{-c_TJ\log J},
 \qquad J\in\mathbb N,\quad J\ge2.
\]
\end{lemma}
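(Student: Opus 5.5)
The plan is to dominate the tree-generating process by a Galton--Watson branching process and then count leaves by depth. Under Assumption~\ref{ass: tree prior}, a node at depth $h$ splits with probability $\rho(v)=\alpha_0^{h+1}$, independently of the cuts and axes. The topology of $T$ is therefore an inhomogeneous branching process in which each node at depth $h$ has either $0$ or $2$ children. Write $Z_h$ for the number of nodes at depth $h$. Then $\mathbb E[Z_{h+1}\mid Z_h]=2\alpha_0^{h+1}Z_h$, so
\[
\mathbb E Z_h=\prod_{i=1}^{h}2\alpha_0^{i}=2^h\alpha_0^{h(h+1)/2}.
\]
This tends to $0$ superexponentially because $\alpha_0<1/2$. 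By Markov's inequality, $\pi(Z_h\ge1)\to0$, so $T$ is finite almost surely. The expected total size $\sum_h\mathbb E Z_h$ is also finite.

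For the tail bound, I would use the fact that a binary tree with more than $J$ leaves has more than $J-1$ internal nodes. The event $|\mathcal L(T)|>J$ then forces one of two things. Either the tree reaches some depth $H\ge H_J$, where $H_J\asymp\sqrt{J}$ or a similar threshold, or there are many internal nodes at depths below $H_J$. The cleaner route is a direct count. Any tree with $|\mathcal L(T)|=L$ has $L-1$ internal nodes, and its prior probability is at most $\prod_{u\in\mathcal I(T)}\alpha_0^{\mathrm{depth}(u)+1}$. For a fixed number $L-1$ of internal nodes, the sum of their depths is minimized by the complete, balanced arrangement. In such a tree at least half of the internal nodes lie at depth $\ge\log_2(L-1)-2$, so
\[
\sum_{u\in\mathcal I(T)}(\mathrm{depth}(u)+1)\ \ge\ c\,(L-1)\log_2 (L-1)
\]
for a universal $c>0$. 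Each such tree therefore has probability at most $\alpha_0^{c(L-1)\log_2(L-1)}$.

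The number of binary tree shapes with $L-1$ internal nodes is the Catalan number $C_{L-1}\le 4^{L}$. Hence
\[
\pi\{|\mathcal L(T)|=L\}\le 4^{L}\exp\{-c(L-1)\log_2(L-1)\log(1/\alpha_0)\},
\]
which is at most $\exp\{-c'L\log L\}$ for large $L$. Summing over $L>J$, the series is dominated by a geometric multiple of its first term, giving $C_Te^{-c_TJ\log J}$. The constant $C_T$ absorbs the finitely many small $J$. The axis and cut labels integrate to one and do not affect the bound.

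The main obstacle is the combinatorial lower bound on the total depth $\sum_u(\mathrm{depth}(u)+1)$ for a binary tree with $L-1$ internal nodes. I would prove it from the fact that at most $2^{h}$ nodes sit at depth $h$. With $m=L-1$, at most $2^{\lfloor\log_2 m\rfloor-1}\le m/2$ internal nodes have depth below $\log_2 m-1$. The remaining $\ge m/2$ internal nodes each contribute depth at least $\log_2 m-1$, which gives $\sum_u\mathrm{depth}(u)\ge (m/2)(\log_2 m-1)$. This already dominates the $4^L$ Catalan factor, because $\log(1/\alpha_0)>\log 2$ multiplies an $L\log L$ term. That comparison is where the condition $\alpha_0<1/2$ enters the argument.
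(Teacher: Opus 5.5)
Your proof is correct, but it reaches the tail bound by a different route from the paper. The finiteness part is the same: the paper also computes $\mathbb{E}_\pi[Z_h]=2^h\alpha_0^{h(h+1)/2}$ and applies Markov's inequality.

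For the tail bound, however, the paper does no combinatorics. It notes that $\rho(v)=\alpha_0^{1+\mathrm{depth}(v)}$ belongs to the depth-exponential family of \citet{Rockova2019} and imports their Corollary~5.2. It adds only that axes and cuts leave the branching law unchanged, and that $C_T$ is enlarged to cover small $J$. Your argument instead has four steps:
\begin{itemize}
\item bound each shape's probability by $\alpha_0^{\sum_u(\mathrm{depth}(u)+1)}$;
\item lower-bound the total depth by order $m\log m$, using that at most $2^h$ nodes sit at level $h$;
\item multiply by the Catalan count $C_m\le 4^m$;
\item sum the tail over $L>J$.
\end{itemize}
This is self-contained and makes the mechanism explicit, whereas the paper's version is shorter but rests entirely on the external result.

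Your argument also gives more than you claim. The term $m\log m\cdot\log(1/\alpha_0)$ eventually dominates $m\log 4$ for every fixed $\alpha_0\in(0,1)$, so your bound holds without $\alpha_0<1/2$. Your closing remark that this comparison is ``where $\alpha_0<1/2$ enters'' is therefore inaccurate. In the paper, that restriction comes with the cited corollary's setting; your proof does not need it.

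There is one small slip in the depth count. You claim at most $2^{\lfloor\log_2 m\rfloor-1}$ internal nodes have depth below $\log_2 m-1$, and this is false. For $m=5$, three internal nodes can sit at depths $0$ and $1$, both below $\log_2 5-1\approx1.32$, so more than $m/2$ nodes can lie below your threshold. Use the threshold $\log_2 m-2$ from your earlier sentence instead. Fewer than $2^{\lceil\log_2 m\rceil-2}\le m/2$ internal nodes lie strictly below it, which gives $\sum_u(\mathrm{depth}(u)+1)\ge (m/2)(\log_2 m-1)$. That is all the rest of your argument needs.
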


\begin{proof}
If $Z_h$ counts nodes at depth $h$, the independent splitting rule gives
\[
 \mathbb{E}_\pi[Z_h]
 =\prod_{i=0}^{h-1}2\alpha_0^{i+1}
 =2^h\alpha_0^{h(h+1)/2},\qquad
 \pi\{|T|=\infty\}
 \le\pi(Z_h>0)\le \mathbb{E}_\pi[Z_h]\to0.
\]
The rule $\rho(v)=\alpha_0^{1+\operatorname{depth}(v)}$,
$0<\alpha_0<1/2$, belongs to the depth-exponential family in
\citet[Section~5.2, (20)]{Rockova2019}. Their Corollary~5.2 gives the
tail bound. The independent axis and cut distributions do not alter the
branching law. The result follows by enlarging $C_T$ to cover all integers $J\ge2$.
\end{proof}

\begin{lemma}[Verification of C2]\label{lem: sieve}
The sets $\mathcal F_k$ are Borel in $\Theta$. Under
Assumptions~\ref{ass: tree prior} and \ref{ass: hyperparameters},
for every $0<c_1<\min\{c_T\delta,1\}$,
\[
 \Pi(\mathcal F_k^c)\le e^{-c_1k}
 \qquad\text{for all sufficiently large }k.
\]
\end{lemma}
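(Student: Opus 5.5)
We verify the two claims separately: Borel measurability of $\mathcal F_k$, and then the exponential tail bound for $\Pi(\mathcal F_k^c)$.

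\textbf{Measurability.} By construction $\mathcal F_k=\mathcal F_{J_k,B_k}$ is the $L^1(\mu)$-closure of $\mathcal E_{J_k,B_k}$. A closed subset of $\Theta$ with the $L^1(\mu)$ topology is Borel, so this part is immediate.

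\textbf{Reduction to two events.} We show that every S-PPT draw $\lambda=\sum_{v}\lambda_v\phi_v(\cdot;\gamma)$ with $|\mathcal L(T)|\le J_k$ and $\max_v\lambda_v\le B_k$ lies in $\mathcal E_{J_k,B_k}\subset\mathcal F_k$. The root-width gate $g_u(t;\gamma)$ equals $G_{c_u,\tau_u}(t)$ with $\tau_u=W_{j_u}/\gamma_{j_u}\in(0,\infty)$. The cut $c_u$ lies in $[0,1]$ because $D=[0,1]^d$. Therefore
\[
\Pi(\mathcal F_k^c)\le \pi\{|\mathcal L(T)|>J_k\}+\pi\Bigl\{|\mathcal L(T)|\le J_k,\ \max_{v\in\mathcal L(T)}\lambda_v>B_k\Bigr\}.
\]
Here we use that $T$ is finite almost surely by Lemma~\ref{lem: tree tail}. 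We also use that the map from parameters to $\lambda$ is measurable, so these events are well defined.

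\textbf{First term.} Lemma~\ref{lem: tree tail} gives the bound $C_T\exp(-c_TJ_k\log J_k)$. By \eqref{eqn: sievelimits}, $J_k\log J_k/k\to\delta$. Hence for any $c<c_T\delta$ this term is at most $C_Te^{-ck}$ for large $k$.

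\textbf{Second term.} Condition on $T$ with $L\le J_k$ leaves. A union bound over the i.i.d.\ $\mathrm{Ga}(\kappa,\kappa/\xi)$ leaf values gives at most $J_k\Pr(\lambda_v>B_k)$. A Chernoff bound with rate parameter $s=\kappa/(2\xi)$ gives
\[
\Pr(\lambda_v>B_k)\le \Bigl(\frac{\kappa/\xi}{\kappa/\xi-s}\Bigr)^{\kappa}e^{-sB_k}=2^\kappa e^{-k},
\]
since $sB_k=(\kappa/2\xi)(2\xi k/\kappa)=k$. The second term is therefore bounded by $2^\kappa J_k e^{-k}$. Because $J_k=o(k)$, this is at most $e^{-c'k}$ for any $c'<1$ once $k$ is large.

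\textbf{Combining.} Given $c_1<\min\{c_T\delta,1\}$, pick $c$ and $c'$ strictly between $c_1$ and $c_T\delta$ and $1$, respectively. The sum $C_Te^{-ck}+2^\kappa J_ke^{-c'k}$ is then $\le e^{-c_1k}$ for all sufficiently large $k$, which is the claimed bound.

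\textbf{Main obstacle.} The delicate step is the inclusion of S-PPT draws into $\mathcal E_{J,B}$. Its exact form must match the sieve definition: gate parametrization, cuts in $[0,1]$, and the number of internal nodes $L-1$ per topology. A secondary point is justifying the union bound conditional on $T$, which works since the leaf values are independent of the tree beyond their number.
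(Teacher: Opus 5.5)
Your proposal is correct and follows the paper's proof essentially step for step. Closedness gives Borel measurability, the root-width gates embed into the sieve class via $\tau_u=W_{j_u}/\gamma_{j_u}$ (which is the paper's $1/\gamma_{j_u}$ on $[0,1]^d$), and the bound $\pi\{|\mathcal L(T)|>J_k\}+J_k\Pr(Z>B_k)\le C_Te^{-c_TJ_k\log J_k}+2^\kappa J_ke^{-k}$ comes from the tail lemma plus the same exponential-moment (Chernoff/Markov) estimate with $s=\kappa/(2\xi)$; the paper concludes via $\limsup_k k^{-1}\log\Pi(\mathcal F_k^c)\le\max\{-c_T\delta,-1\}$, which is equivalent to your choice of intermediate rates $c,c'$.
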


\begin{proof}
The sets $\mathcal F_k$ are $L^1(\mu)$-closed, hence Borel.
Each model gate belongs to this class: take
$\tau_v=1/\gamma_{j_v}$ and its original cut $c_v$, so that
\[
 G_{c_v,\tau_v}(x_{j_v})
 =\{1+e^{-\gamma_{j_v}(x_{j_v}-c_v)}\}^{-1}.
\]
Thus
$\{|\mathcal L(T)|\le J_k,\
   \max_{v\in\mathcal L(T)}\lambda_v\le B_k\}$
implies $\lambda\in\mathcal F_k$.
For $Z\sim\mathrm{Ga}(\kappa,\kappa/\xi)$, Markov's inequality gives
\[
 \mathbb{E}[e^{\kappa Z/(2\xi)}]=2^\kappa,\qquad
 \Pr(Z>B_k)\le2^\kappa e^{-k}.
\]
The conditional union bound and Lemma~\ref{lem: tree tail} yield
\begin{equation}\label{eqn: sieveprior}
 \Pi(\mathcal F_k^c)
 \le\pi\{|\mathcal L(T)|>J_k\}+J_k\Pr(Z>B_k)
 \le C_Te^{-c_TJ_k\log J_k}+2^\kappa J_ke^{-k}
\end{equation}
for all large $k$. It follows from  \eqref{eqn: sievelimits} that
\[
 \limsup_{k\to\infty}\frac1k\log\Pi(\mathcal F_k^c)
 \le\max\{-c_T\delta,-1\}<-c_1.
\]
\end{proof}

\subsection{Verification of Condition C3}\label{app: C3}

Write $N(t,\mathcal F,d)$ for the minimum number of closed $d$-balls
of radius $t$, centered in $\mathcal F$, needed to cover $\mathcal F$,
with value $\infty$ if no finite cover exists
\citep[Appendix~C.1]{Ghosal2017}.

\begin{lemma}[Verification of C3]\label{lem: tests}
For the sieves $\mathcal F_k$, there exist measurable tests
$\psi_k\in[0,1]$ such that, for all sufficiently large $k$,
\[
 \mathbb{E}_0^{(k)}[\psi_k]\le e^{-k\Delta^2/16},\qquad
 \sup_{\lambda\in\mathcal F_k\cap A_\epsilon}
       \mathbb{E}_\lambda^{(k)}[1-\psi_k]
 \le e^{-k\Delta^2/8}.
\]
\end{lemma}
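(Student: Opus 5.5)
The plan is the standard route through covering numbers, in the spirit of \citet[proof of Theorem~6.23]{Ghosal2017}. Tests are built against Hellinger balls, and a union bound runs over a finite cover of $\mathcal F_k\cap A_\epsilon$.

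\emph{Step 1: tests against a single alternative.} The constant $\Delta=1-e^{-\epsilon^2/2}$ signals the density Hellinger affinity. For Poisson processes,
\[
\rho(\mathbb P_{\lambda_1},\mathbb P_{\lambda_2})=\int\sqrt{p_{\lambda_1}p_{\lambda_2}}\,\mathrm dQ=\exp\{-d_H^2(\lambda_1,\lambda_2)/2\},
\]
using \citet{Leskela2024}. So $d_H(\lambda,\lambda_0)>\epsilon$ forces the squared density Hellinger distance $h^2=2(1-\rho)$ to exceed $2\Delta$.

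For a fixed $\lambda_1$ with $h(p_0,p_{\lambda_1})\ge\sqrt{2\Delta}$, I take the likelihood-ratio test $\psi=\mathds 1\{\prod_i p_{\lambda_1}(X_i)/p_0(X_i)>1\}$. By the Le Cam--Birgé construction \citep[Proposition~D.8]{Ghosal2017}, this gives
\[
\mathbb E_0^{(k)}\psi\le e^{-kh^2/8}
\]
and, uniformly over the density Hellinger ball of radius $h/4$ (or a suitable fraction) around $p_{\lambda_1}$, $\mathbb E_\lambda^{(k)}(1-\psi)\le e^{-kh^2/8}$. Both bounds are at most $e^{-k\Delta/4}$, which is at most $e^{-k\Delta^2/8}$ since $\Delta<1$.

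\emph{Step 2: entropy of the sieve.} This is the main obstacle. I need
\[
\log N(t,\mathcal F_k,h)\le k\Delta^2/32
\]
for some fixed $t$ proportional to $\Delta$. First I bound $h$ by an $L^1$-type quantity: $h^2\le d_H^2(\lambda_1,\lambda_2)\le\|\lambda_1-\lambda_2\|_{L^1(\mu)}$, using $(\sqrt a-\sqrt b)^2\le|a-b|$. It then suffices to cover $\mathcal E_{J_k,B_k}$ in $L^1(\mu)$ at radius $t^2$. Covering $\mathcal E$ suffices, since a cover of $\mathcal E$ with slightly enlarged radius covers its closure.

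For a fixed topology with $L\le J_k$ leaves and fixed axes, the leaf parameters are handled by a grid of mesh $\eta/(JV)$ on $[0,B_k]^L$. This costs at most $(B_kJ_kV/\eta)^{J_k}$ points, because $\sum_v\phi_v=1$ makes the map Lipschitz in sup norm.

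The gates are the delicate part, since $\tau\in(0,\infty)$ is unbounded. The key observation is that each one-dimensional logistic gate $y\mapsto G_{c,\tau}(y)$ is monotone with values in $[0,1]$. Also, $\phi_v$ is a product of at most $J$ such gates, so $|\prod a_i-\prod b_i|\le\sum|a_i-b_i|$.

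I will therefore cover the class $\{G_{c,\tau}\}\cup\{\text{step functions}\}$ of monotone $[0,1]$-valued functions of one coordinate. I expect a bracketing or $L^1$ cover of size $\exp(C/\eta)$, or even polynomial, because the family is two-parametric with a compactification at $\tau\in\{0,\infty\}$. I will parametrize it by $(c,\arctan\tau)$ on a compact set on which the map is continuous into $L^1$ of the marginal measure; an $L^\infty$ cover is not available. This needs the marginal of $\mu$ along each coordinate to be diffuse, which is why diffuseness was assumed.

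The resulting gate cover has cardinality $N_g(\eta/J)^{J}$. The topologies with axes number at most $(4d)^{J}$ (Catalan bound). Altogether,
\[
\log N\lesssim J_k\log(B_kJ_k/\eta)+J_k\log N_g(\eta/J_k)+J_k\log(4d).
\]
This is of order $J_k\log k$, about $\delta k$ by \eqref{eqn: sievelimits}, provided $\log N_g(\eta/J)=O(\log J)$. I expect to get that polynomial gate cover from the compact two-parameter family, or from Lipschitz dependence for $\tau$ bounded away from zero combined with step-function limits. With $\delta=\Delta^2/128$, the constants are arranged to close.

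\emph{Step 3: combine.} Cover $\mathcal F_k\cap A_\epsilon$ by $N_k$ balls whose centers $\lambda_l$ lie in $A_{\epsilon}$ (slightly shrinking $\epsilon$ if needed), and set $\psi_k=\max_l\psi_{k,l}$. Then
\[
\mathbb E_0\psi_k\le N_ke^{-k\Delta/4}\le e^{k\Delta^2/32-k\Delta/4}\le e^{-k\Delta^2/16},
\]
while the type II error is at most $e^{-k\Delta^2/8}$, uniformly over $\mathcal F_k\cap A_\epsilon$, for large $k$. Measurability follows because each $\psi_{k,l}$ is a measurable function of the counting data.
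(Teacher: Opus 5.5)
Your architecture is the paper's: cover $\mathcal E_{J_k,B_k}$ in $L^1(\mu)$, pass to Hellinger via $(\sqrt a-\sqrt b)^2\le|a-b|$, count topologies and axes with the Catalan bound, and take the maximum of Le Cam--Birg\'e tests over a finite cover. The genuine gap is the step you single out yourself. You need $\log N_g=O(\log k)$ for the logistic gate class in $L^1(\mu_j)$, you leave it as an expectation, and none of your suggested routes delivers it.

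\emph{The radius shrinks polynomially in $k$.} Gate errors are multiplied by leaf values up to $B_k$, a factor your radius $\eta/J$ omits. The paper's telescoping identity $f^{(i-1)}-f^{(i)}=\phi_v(g_v-g_v')(f_{v,\mathrm r}-f_{v,\mathrm l})$ therefore needs radius of order $t^2/(B_kJ_k)$. Your product bound needs even less, about $t^2/(B_kJ_k^2)$, because a gate near the root is shared by up to $J_k$ leaves.

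\emph{Why your routes fail.} An $\exp(C/\eta)$ cover of monotone-class or bracketing type gives $\log N\gtrsim J_k\cdot B_kJ_k/t^2\gg k$, which is fatal. Even at your radius $\eta/J_k$ it gives $J_k^2/\eta\gg k$. The compactification in $(c,\arctan\tau)$ yields only $N_g(\eta)<\infty$ for each fixed $\eta$, with no rate. Its continuity at $\tau=0$ also requires diffuse coordinate marginals, and these do not follow from diffuseness of $\mu$. For $d\ge2$, Lebesgue measure on the slice $\{x_1=1/2\}$ is diffuse, yet $\mu_1$ is a point mass, and $G_{c,\tau}(1/2)$ has no limit as $(c,\tau)\to(1/2,0)$. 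Lipschitz control in $\tau$ glued to step-function limits would need a modulus of continuity for the distribution function of $\mu_j$, which a general $\mu$ does not provide.

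\emph{The missing idea is the VC-subgraph property.} You have $a<G_{c,\tau}(y)$ if and only if $\log\{a/(1-a)\}<(y-c)/\tau$. So the subgraphs are cut out by a two-dimensional vector space of affine functions, and the subgraph VC index is at most $4$. The uniform entropy bound of van der Vaart and Wellner (1996, Theorem~2.6.7) then gives $N(\eta,\mathcal G,L^1(\nu))\le C\eta^{-3}$ for every probability measure $\nu$, with no assumption on the marginals. Hence $\log N_g=O(\log k)$ at either radius.

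Two smaller repairs remain. First, the plain likelihood-ratio test $\mathds 1\{\prod_i p_{\lambda_1}(X_i)/p_0(X_i)>1\}$ has no uniform type II control over a Hellinger ball around $p_{\lambda_1}$. An alternative that is Hellinger-close to $p_{\lambda_1}$ can put small mass where $p_{\lambda_1}\ll p_0$ and drive the ratio below one. Proposition~D.8 of Ghosal and van der Vaart gives a different, minimax test against the convex ball; invoke its existence instead.

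Second, your target $\log N\le k\Delta^2/32=4\delta k$ is not what the count gives. The paper's bookkeeping yields $7J_k\log k\approx7\delta k$, and yours somewhat more. You do not need that target, though. Because you test in density Hellinger with $h^2>2\Delta$ on $A_\epsilon$, your type~I exponent is $\Delta/4$. It then suffices that $\log N_k\le k(\Delta/4-\Delta^2/16)$, and since $\Delta<1$ this follows from $\limsup_k k^{-1}\log N_k<24\delta$.

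That slack is what your route buys over the paper's total-variation tests. Those have exponent $\Delta^2/8$ and therefore need $\limsup_k k^{-1}\log N_k<8\delta$, which $7\delta$ only just meets. With the VC gate bound inserted, your argument closes.
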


\begin{proof}
We cover the gates and leaf coefficients, bound the resulting sieve
entropy, and combine local tests. The product-cover strategy follows
\citet[Supplement, Section~S.2, proof of Lemma~1(1)]{Linero2018};
the testing argument follows \citet[proof of Theorem~6.23]{Ghosal2017}.

\medskip
\noindent\emph{Step 1. Cover the gates.}
Let $\mu_j$ be the $j$th coordinate marginal of the probability measure $\mu/V$.
Set
\[
 \mathcal G=\left\{G_{c,\tau} (y):=\frac{1}{1 + \exp\{ - (y-c)/\tau \}}:c\in[0,1],\ \tau>0 \right\},\qquad
 \mathcal H=\{y\mapsto by+c:(b,c)\in\mathbb R^2\}.
\]
By \citet[Lemma~2.6.15]{Vaart1996}, $\dim(\mathcal H)=2$ implies
subgraph VC index at most $4$. For $0<a<1$,
\[
 a<G_{c,\tau}(y)
 \quad\Longleftrightarrow\quad
 \log\frac a{1-a}<\frac{y-c}{\tau},\qquad
 \left(y\mapsto\frac{y-c}{\tau}\right)\in\mathcal H.
\]
Membership is fixed for $a\notin(0,1)$, so $\mathcal G$ also has
subgraph VC index at most $4$.
Applying Theorem~2.6.7 of \citet{Vaart1996}, with envelope $1$ and $r=1$,
yields, 
$$
N(u,\mathcal G,L^1(\mu_j)) \leq C_0 u^{-3}, \quad 0<u<1
$$
for a universal constant $C_0$.

To ensure the covering centers belong to $\mathcal{G}$, we can take $u= \eta/2$ for $\eta\in (0,1)$. Then  by covering property, there exists functions $f_1, \ldots, f_m$ such that $m \leq C_0 (\eta/2)^{-3}$ and 
$$
\mathcal G \subset \cup_{i=1}^m \{ f: \| f- f_i\|_{L^1(\mu_j)} \leq \eta/2\}.
$$

Discard balls that do not intersect $\mathcal G$ and relabel the remaining centers. Their number still satisfies the same bound. From each remaining ball choose $G_i\in \mathcal{G}$ such that $\|G_i - f_i\|_{L^1(\mu_j)} \leq \eta/2$. For any $G\in \mathcal{G}$, select $i$ satisfying $\|G-f_i\|_{L^1(\mu_j)} \leq \eta/2$ then 
$$
\| G-G_i\|_{L^1(\mu_j)} \leq \| G - f_i\|_{L^1(\mu_j)} + \| f_i - G_i\|_{L^1(\mu_j)} \leq \eta.
$$
Thus $G_1, \ldots, G_m \in \mathcal G$ form an $\eta$-cover of $\mathcal G$ and 
$$
N(\eta,\mathcal G,L^1(\mu_j)) \leq m \leq  C_{\rm VC}  \eta^{-3}. 
$$
for $\eta \in (0,1)$, where $C_{\rm VC}=\max\{1,8C_0\}$.

\medskip
\noindent\emph{Step 2. Cover the sieve.}
Fix $t>0$ and consider $\mathcal F_k=\mathcal F_{J_k,B_k}$. Since
\[
 0<\eta:=\frac{t^2}{2B_kVJ_k}
 =\frac{\kappa t^2}{4\xi VkJ_k}
 \le\frac{\kappa t^2}{4\xi Vk}\longrightarrow0,
\]
take $k$ sufficiently large that $\eta<1$ throughout this step.
Fix $(T,\boldsymbol j)$ with $L\le J_k$ leaves. Write
\[
 f=\sum_{v\in\mathcal L(T)}\lambda_v\phi_v^g,\qquad
 f'=\sum_{v\in\mathcal L(T)}\lambda_v'\phi_v^{g'}.
\]
If the leaf coefficients agree and $g,g'$ differ only at node $v$,
the routing recursion gives
\[
 f-f'
 =\phi_v(g_v-g_v')(f_{v,\mathrm r}-f_{v,\mathrm l}),\qquad
 0\le\phi_v\le1,\quad
 0\le f_{v,\mathrm l},f_{v,\mathrm r}\le B_k,
\]
where $f_{v,\mathrm l},f_{v,\mathrm r}$ are the child-subtree intensities,
with initial routing weight $1$ at each child.

Enumerate the internal nodes as $v_1,\ldots,v_m$, where $m=L-1$. Let $f^{(i)}$ use the new gates at $v_1,\ldots,v_i$, the original gates elsewhere, and the original leaf coefficients. Then 
$$
f^{(0)} = f, \, f^{(m)} = \sum_{v\in\calL} \lambda_v \phi_v^{g'},\, f' =  \sum_{v\in\calL} \lambda_v' \phi_v^{g'}.
$$
Observe that $ | f^{(i-1)}(\bfx)  - f^{(i)}(\bfx) | \leq B_k | g_{v_i} (\bfx) - g'_{v_i}(\bfx)|$. Then telescoping sum yields 
\begin{align*}
\| f- f'\|_1 &= \int_D | f(\bfx) - f'(\bfx) | \mu(\mathrm{d}\bfx) \\
&= \int_D \left| \sum_{i=1}^m [f^{(i-1)}(\bfx) - f^{(i)}(\bfx)] + \sum_{v\in \calL} (\lambda_v - \lambda_v') \phi_v^{g'}(\bfx) \right| \mu(\mathrm{d} \bfx) \\
&\leq \sum_{i=1}^m \int_D | f^{(i-1)}(\bfx) - f^{(i)}(\bfx) | \mu(\mathrm{d}\bfx) + \int_D \sum_{v\in \mathcal L} |\lambda_v - \lambda_v'| \phi_v^{g'}(\bfx) \mu(\mathrm{d} \bfx) \\
& \leq B_k \sum_{v \in \calI} \| g_v - g_v'\|_1 + V \| \bflambda - \bflambda'\|_{\infty}
\end{align*} 
where the last inequality uses $\sum_{v \in \calL} \phi_v^{g'}(\bfx) = 1$ and $V=\int_D 1 \mu(\mathrm{d}\bfx)$.

We allocate at most $t^2/2$ to each term in the preceding
$L^1(\mu)$ bound. For the coefficient term, this requires
\[
 V\|\bflambda-\bflambda'\|_\infty\le\frac{t^2}{2}
 \quad\Longleftrightarrow\quad
 \|\bflambda-\bflambda'\|_\infty\le\frac{t^2}{2V}.
\]
Set
\[
 a=\frac{t^2}{2V},\qquad
 \mathcal A_t=\{ia:i=0,\ldots,\lfloor B_k/a\rfloor\}\cup\{B_k\}.
\]
The gap between consecutive grid points is at most $a$. For every
$\bflambda\in[0,B_k]^L$, choosing
$\lambda_v'=a\lfloor\lambda_v/a\rfloor\in\mathcal A_t$ gives
\[
 0\le\lambda_v-\lambda_v'<a,\qquad
 V\|\bflambda-\bflambda'\|_\infty
 \le Va=V\frac{t^2}{2V}=\frac{t^2}{2}.
\]
The product grid for the $L$ coefficients has cardinality
\[
 |\mathcal A_t^L|=|\mathcal A_t|^L
 \le\left(\left\lfloor\frac{B_k}{a}\right\rfloor+2\right)^L
 \le\left(2+\frac{B_k}{a}\right)^L
 =\left(2+\frac{2B_kV}{t^2}\right)^L.
\]

For the gates, write $g_v(\bfx)=G_v(x_{j_v})$ and
$g_v'(\bfx)=G_v'(x_{j_v})$, with $G_v,G_v'\in\mathcal G$.
A gate cover of radius $\eta$
in $L^1(\mu_{j_v})$ permits a choice satisfying
\[
 \|g_v-g_v'\|_1
 =V\|G_v-G_v'\|_{L^1(\mu_{j_v})}\le V\eta.
\]
There are $L-1$ internal nodes, so the choice
$\eta=t^2/(2B_kVJ_k)$ ensures
\[
 B_k\sum_{v\in\mathcal I(T)}\|g_v-g_v'\|_1
 \le B_kV(L-1)\eta
 \le B_kVJ_k\eta
 =B_kVJ_k\frac{t^2}{2B_kVJ_k}=\frac{t^2}{2}.
\]
Since $0<\eta<1$, Step~1 bounds the number of choices at each node by
\begin{align*}
 N(\eta,\mathcal G,L^1(\mu_{j_v}))
 &\le C_{\rm VC}\eta^{-3}
 =C_{\rm VC}
   \left(\frac{1}{t^2/(2B_kVJ_k)}\right)^3
 =C_{\rm VC}\left(\frac{2B_kVJ_k}{t^2}\right)^3.
\end{align*}
Taking the product of these covers over the $L-1$ internal nodes
therefore gives at most
\[
 \left\{C_{\rm VC}\left(\frac{2B_kVJ_k}{t^2}\right)^3\right\}^{L-1}
\]
gate configurations. Every $f$ has an approximating intensity $f'$
formed from the coefficient grid and these gate covers, with
\[
 \|f-f'\|_1
 \le V\|\bflambda-\bflambda'\|_\infty
     +B_k\sum_{v\in\mathcal I(T)}\|g_v-g_v'\|_1
 \le\frac{t^2}{2}+\frac{t^2}{2}=t^2.
\]
The Catalan count \citep[Exercise~6.19(d)]{Stanley1999} gives
\[
 \sum_{L=1}^{J_k}\frac1L\binom{2L-2}{L-1}d^{L-1}
 \le J_k(4d)^{J_k-1}.
\]

Using $L\leq J_k$ in the coefficient and gate bound and multiplying by the topology and axis count yield 
$\lambda_1,\ldots,\lambda_N\in\mathcal E_{J_k,B_k}\subseteq\mathcal F_k$, with  $N$ bounded by the right-hand side of \eqref{eqn: entropy}. 

Note that $\forall f \in \mathcal E_{J_k,B_k}$, there exists $i \in \{1, \ldots, N\}: \| f-\lambda_i\|_1 \leq t^2$. Thus 
\[
 \mathcal E_{J_k,B_k}
 \subseteq\bigcup_{i=1}^N
 \{\lambda\in L^1(\mu):\|\lambda-\lambda_i\|_1\le t^2\}.
\]
For each $i$, the reverse triangle inequality gives
\[
 \bigl|\|\lambda-\lambda_i\|_1
       -\|\lambda'-\lambda_i\|_1\bigr|
 \le\|\lambda-\lambda'\|_1.
\]
Thus $\lambda\mapsto\|\lambda-\lambda_i\|_1$ is continuous in
$L^1(\mu)$, so each displayed ball is closed. Their union is closed
because $N<\infty$. Taking the $L^1(\mu)$ closure of the first inclusion
therefore yields
\[
 \mathcal F_k
 \subseteq\bigcup_{i=1}^N
 \{\lambda\in L^1(\mu):\|\lambda-\lambda_i\|_1\le t^2\}.
\]

For $\lambda,\lambda'\in\Theta$, we adapt the argument in the proof of
\citet[Lemma~B.1(i)--(ii), pp.~517--518]{Ghosal2017}, stated there for
probability densities, to the present nonnegative, integrable intensities.
First, for $a,b\ge0$,
\[
 |\sqrt a-\sqrt b|\le\sqrt a+\sqrt b,
 \qquad
 (\sqrt a-\sqrt b)^2
 \le|\sqrt a-\sqrt b|(\sqrt a+\sqrt b)=|a-b|.
\]
Taking $a=\lambda(\bfx)$ and $b=\lambda'(\bfx)$ for $\mu$-a.e. $\bfx$
and integrating gives
\begin{equation}\label{eqn: L1bound}
d_H^2(\lambda,\lambda')
 =\int_D(\sqrt\lambda-\sqrt{\lambda'})^2\,\mathrm d\mu \le\int_D|\lambda-\lambda'|\,\mathrm d\mu
 =\|\lambda-\lambda'\|_1.
\end{equation}
Moreover, integrability gives
\[
 \|\sqrt\lambda\|_2^2=\int_D\lambda\,\mathrm d\mu<\infty,
 \qquad
 \|\sqrt{\lambda'}\|_2^2=\int_D\lambda'\,\mathrm d\mu<\infty.
\]
Thus $|\sqrt\lambda-\sqrt{\lambda'}|$ and
$\sqrt\lambda+\sqrt{\lambda'}$ belong to $L^2(\mu)$.
It follows from Cauchy--Schwarz and the $L^2(\mu)$ triangle inequality that 
\begin{align*}
 \|\lambda-\lambda'\|_1
 &=\int_D|\sqrt\lambda-\sqrt{\lambda'}|
           (\sqrt\lambda+\sqrt{\lambda'})\,\mathrm d\mu\\
 &\le\left\{\int_D(\sqrt\lambda-\sqrt{\lambda'})^2\,\mathrm d\mu\right\}^{1/2}
      \left\{\int_D(\sqrt\lambda+\sqrt{\lambda'})^2\,\mathrm d\mu\right\}^{1/2}\\
 &=d_H(\lambda,\lambda')\|\sqrt\lambda+\sqrt{\lambda'}\|_2\\
 &\le d_H(\lambda,\lambda')
       \bigl(\|\sqrt\lambda\|_2+\|\sqrt{\lambda'}\|_2\bigr).
\end{align*}
In particular, \eqref{eqn: L1bound} gives
\[
 \|\lambda-\lambda_i\|_1\le t^2
 \quad\Longrightarrow\quad d_H(\lambda,\lambda_i)\le t.
\]
The previously constructed cover therefore satisfies
\[
 \mathcal F_k\subseteq\bigcup_{i=1}^N
 \{\lambda\in\Theta:d_H(\lambda,\lambda_i)\le t\},
 \qquad \lambda_i\in\mathcal F_k.
\]

For each topology and axis assignment with $L\le J_k$ leaves, the
coefficient and gate covers give at most
\[
 \left(2+\frac{2B_kV}{t^2}\right)^L
 (C_{\rm VC}\eta^{-3})^{L-1}
\]
centers. There are
$L^{-1}\binom{2L-2}{L-1}d^{L-1}$ such topology--axis combinations.
The resulting closed $L^1(\mu)$-balls also cover $\mathcal F_k$ and lead to a Hellinger cover of radius $t$ since $d_H^2\le\|\cdot\|_1$.

For each fixed $t>0$ and sufficiently large $k$, we have
$0<\eta<1$ and $C_{\rm VC}\eta^{-3}\ge1$. Thus
\begin{align}
 N(t,\mathcal F_k,d_H)
 &\le \sum_{L=1}^{J_k}
 \frac1L\binom{2L-2}{L-1}d^{L-1}
 \left(2+\frac{2B_kV}{t^2}\right)^L
 (C_{\rm VC}\eta^{-3})^{L-1} \notag \\
 &\le
 \left(2+\frac{2B_kV}{t^2}\right)^{J_k}
 (C_{\rm VC}\eta^{-3})^{J_k-1}
 \sum_{L=1}^{J_k}(4d)^{L-1} \notag \\ \label{eqn: entropy}
 &\le J_k(4d)^{J_k-1}
 \left(2+\frac{2B_kV}{t^2}\right)^{J_k}
 \left\{C_{\rm VC}
 \left(\frac{2B_kVJ_k}{t^2}\right)^3\right\}^{J_k-1},
\end{align}
where we used
$L^{-1}\binom{2L-2}{L-1}\le4^{L-1}$ and
$\eta=t^2/(2B_kVJ_k)$.

\medskip
\noindent\emph{Step 3. Bound the entropy growth.}
The leaf-intensity bound remains $B_k=2\xi k/\kappa$, as in Step~2.
For fixed $t>0$, this choice and $J_k\le k$ eventually give
\[
 \log(2+2B_kV/t^2)=\log k+O_t(1),\qquad
 \log(2B_kVJ_k/t^2)\le2\log k+O_t(1).
\]
Here the constants in $O_t(\cdot)$ may depend on $t$ and the fixed
model parameters, but not on $k$.

It follows from Eqs.~\eqref{eqn: sievelimits} and~\eqref{eqn: entropy} that
\[
\log N(t,\mathcal F_k,d_H)
 \le7J_k\log k+O_t(J_k+\log J_k),\qquad \limsup_{k\to\infty}\frac1k\log N(t,\mathcal F_k,d_H)
 \le7\delta=\frac{7\Delta^2}{128}<\frac{\Delta^2}{16}.
\]

\medskip
\noindent\emph{Step 4. Construct and combine tests.}
We use the finite-cover testing construction in
\citet[proof of Theorem~6.23, p.~142]{Ghosal2017}.
For probability densities $p,q$ with respect to $Q$, write
$d_{\rm TV}(p,q)=\tfrac12\|p-q\|_{L^1(Q)}$.
The Poisson affinity formula in \citet[Theorem~7]{Leskela2024},
together with \citet[Lemma~B.1(i)--(ii)]{Ghosal2017}, gives
\begin{equation}\label{eqn: probabilitybounds}
1-e^{-d_H^2(\lambda,\lambda')/2}
 \le d_{\rm TV}(p_\lambda,p_{\lambda'}),\qquad d_{\rm TV}(p_\lambda,p_{\lambda'})
 \le\sqrt{1-e^{-d_H^2(\lambda,\lambda')}}
 \le d_H(\lambda,\lambda').
\end{equation}

Recall $\Delta=1-e^{-\epsilon^2/2}$ and put
\[
 E_k=\mathcal F_k\cap A_\epsilon,
 \qquad N_k=N(\Delta/8,\mathcal F_k,d_H).
\]
By Step~3, $\log N_k\le k\Delta^2/16$ for all sufficiently large $k$.
If $E_k=\varnothing$, take $\psi_k=0$. Otherwise, take an $N_k$-ball
cover of $\mathcal F_k$ of radius $\Delta/8$. Retain only balls
intersecting $E_k$, let $M_k$ be their number, and choose a new center
$\lambda_{k,j}$ in each intersection. The triangle inequality increases
the covering radius to at most $\Delta/8+\Delta/8=\Delta/4$, so
\[
 E_k\subseteq\bigcup_{j=1}^{M_k}
 \{\lambda\in\Theta:d_H(\lambda,\lambda_{k,j})\le\Delta/4\},
 \qquad \lambda_{k,j}\in E_k,\quad M_k\le N_k.
\]
For each $j=1,\ldots,M_k$, \eqref{eqn: probabilitybounds} gives
\[
d_{\rm TV}(p_0,p_{\lambda_{k,j}})
 \ge1-e^{-d_H^2(\lambda_0,\lambda_{k,j})/2}>\Delta,\qquad d_H(\lambda,\lambda_{k,j})\le\Delta/4
 \quad\Longrightarrow\quad
 d_{\rm TV}(p_\lambda,p_{\lambda_{k,j}})\le\Delta/4<\Delta/2.
\]

On the convex class of all $Q$-densities, total variation balls
are convex, since $d_{\rm TV}(p,q)=\tfrac12\|p-q\|_{L^1(Q)}$.
Moreover, $d_{\rm TV}(p,q)\le\|\sqrt p-\sqrt q\|_{L^2(Q)}$
by \citet[Lemma~B.1(i)]{Ghosal2017}. Thus it follows from 
\citet[Proposition~D.8, p.~540]{Ghosal2017} that 
\[
 d=d_{\rm TV},\quad p=p_0,\quad q=p_{\lambda_{k,j}},\quad
 n=k,\quad\varepsilon=\Delta,\quad c=1,
\]
which yields measurable tests $\psi_{k,j}=\psi_{k,j}(X^{(k)})\in[0,1]$.

By the preceding implication, these tests satisfy
\[
 \mathbb{E}_0^{(k)}[\psi_{k,j}]\le e^{-k\Delta^2/8},\qquad
 \sup_{\substack{\lambda\in\Theta\\
                  d_H(\lambda,\lambda_{k,j})\le\Delta/4}}
 \mathbb{E}_\lambda^{(k)}[1-\psi_{k,j}]\le e^{-k\Delta^2/8}.
\]
Define $\psi_k=\max_{1\le j\le M_k}\psi_{k,j}$.
Using $\psi_k\le\sum_j\psi_{k,j}$, $1-\psi_k\le1-\psi_{k,j}$,
and the cover of $E_k$ yields 
\begin{align*}
 \mathbb{E}_0^{(k)}[\psi_k]
 &\le M_ke^{-k\Delta^2/8}
 \le\exp\{\log N_k-k\Delta^2/8\}
 \le e^{-k\Delta^2/16},\\
 \sup_{\lambda\in E_k}\mathbb{E}_\lambda^{(k)}[1-\psi_k]
 &\le\max_{1\le j\le M_k}
 \sup_{\substack{\lambda\in\Theta\\
                  d_H(\lambda,\lambda_{k,j})\le\Delta/4}}
 \mathbb{E}_\lambda^{(k)}[1-\psi_{k,j}]
 \le e^{-k\Delta^2/8}.
\end{align*}
Thus C3 holds with $c_2=\Delta^2/16$ and $c_3=\Delta^2/8$.
\end{proof}

\subsection{Proof of Theorem~\ref{thm: posterior consistency}} \label{app: consistency} 

\begin{proof}[Proof of Theorem~\ref{thm: posterior consistency}]

We first verify measurability of the KL neighborhoods. Suppose
$\lambda_n,\lambda\in\Theta$ and
$\|\lambda_n-\lambda\|_{L^1(\mu)}\to0$. It follows from Equations~\eqref{eqn: L1bound} and \eqref{eqn: probabilitybounds} that 
\[
 \|p_{\lambda_n}-p_\lambda\|_{L^1(Q)}
 \le2d_H(\lambda_n,\lambda)
 \le2\|\lambda_n-\lambda\|_{L^1(\mu)}^{1/2}\longrightarrow0.
\]

Note that \citet[Lemma~B.4, p.~520]{Ghosal2017} proves
\[
 K(p_0;p_\lambda)
 \le\liminf_{n\to\infty}K(p_0;p_{\lambda_n}).
\]

For $a\in\mathbb R$, set
$C_a=\{\lambda\in\Theta:K(p_0;p_\lambda)\le a\}$.
The preceding inequality  implies
\[
 \lambda_n\in C_a,\quad
 \|\lambda_n-\lambda\|_{L^1(\mu)}\to0
 \quad\Longrightarrow\quad
 K(p_0;p_\lambda)\le\liminf_nK(p_0;p_{\lambda_n})\le a.
\]
Thus $C_a$ is closed in the metric space
$\Theta$. The Borel $\sigma$-field $\mathcal B(\Theta)$ contains closed
sets and their countable unions; therefore, for every $\eta>0$,
\[
 \{\lambda\in\Theta:K(p_0;p_\lambda)<\eta\}
 =\bigcup_{m=1}^{\infty}C_{\eta-1/m}
 \in\mathcal B(\Theta).
\]

It follows from Lemma~\ref{lem: KL support} and 
Assumptions~\ref{ass: truth}--\ref{ass: hyperparameters}, that  C1 holds true: 
\[
 \Pi\{\lambda:K(p_0;p_\lambda)<\eta\}>0
 \qquad(\eta>0).
\]

Fix $\epsilon>0$ and use the sieve construction and
Lemmas~\ref{lem: sieve}--\ref{lem: tests} with radius $\epsilon/2$.
To specify the density model in the consistency theorem, let
\[
 \mathcal P=\{p_\lambda:\lambda\in\Theta\}\subset L^1(Q),
\]
with densities identified $Q$-almost everywhere and $\mathcal P$
equipped with the subspace topology inherited from $L^1(Q)$.
The map
\[
 (\Theta,\|\cdot\|_{L^1(\mu)})
 \longrightarrow(\mathcal P,\|\cdot\|_{L^1(Q)}),
 \qquad\lambda\longmapsto p_\lambda,
\]
 is one-to-one by
\eqref{eqn: probabilitybounds}. Therefore, 
\[
 p_\lambda=p_{\lambda'}\quad Q\text{-a.e.}
 \quad\Longrightarrow\quad
 1-e^{-d_H^2(\lambda,\lambda')/2}=0
 \quad\Longrightarrow\quad
 \lambda=\lambda'\quad\mu\text{-a.e.}
\]
Its continuity follows from the first display of the proof. 

Conversely,
it follows from \eqref{eqn: probabilitybounds} and the Cauchy--Schwarz bound proved in
Lemma~\ref{lem: tests} that 
\begin{gather*}
 \|p_{\lambda_n}-p_\lambda\|_{L^1(Q)}\to0
 \quad\Longrightarrow\quad
 0\le1-e^{-d_H^2(\lambda_n,\lambda)/2}
 \le\tfrac12\|p_{\lambda_n}-p_\lambda\|_{L^1(Q)}\to0,\\
 d_H(\lambda_n,\lambda)\to0,\qquad
 \|\lambda_n-\lambda\|_{L^1(\mu)}
 \le d_H(\lambda_n,\lambda)
 \{d_H(\lambda_n,\lambda)+2\|\sqrt\lambda\|_{L^2(\mu)}\}
 \to0.
\end{gather*}
Thus the inverse is continuous. The displayed map is a homeomorphism
between these two metric spaces. In particular, it carries Borel sets
of $\Theta$ to Borel sets of $\mathcal P$.

Define, within $\mathcal P$,
\[
U_\epsilon=\{p_\lambda:d_H(\lambda,\lambda_0)<\epsilon\},\qquad \mathcal P_{k,1}=\{p_\lambda:\lambda\in\mathcal F_k\},\qquad
 \mathcal P_{k,2}=\{p_\lambda:\lambda\in\mathcal F_k^c\}.
\]
The preceding bounds show that $d_H$ and the $L^1(\mu)$ metric
induce the same topology on $\Theta$. Consequently,
$(U_{1/m})_{m\ge1}$ is a countable open neighborhood base at $p_0$,
and $\mathcal P_{k,1},\mathcal P_{k,2}$ form a Borel partition of
$\mathcal P$. Moreover,
\begin{align*}
 \{\lambda:p_\lambda\in\mathcal P_{k,1}\cap U_\epsilon^c\}
 &=\mathcal F_k\cap\{\lambda:d_H(\lambda,\lambda_0)\ge\epsilon\} \subseteq\mathcal F_k\cap A_{\epsilon/2}.
\end{align*}
Choose $0<C<\min\{c_1,c_2,c_3\}$, with the constants from C2--C3
at radius $\epsilon/2$. For all sufficiently large $k$, we have 
\begin{align*}
 \mathbb{E}_0^{(k)}[\psi_k]
 &\le e^{-c_2k}<e^{-Ck},\\
 \sup_{p_\lambda\in\mathcal P_{k,1}\cap U_\epsilon^c}
 \mathbb{E}_\lambda^{(k)}[1-\psi_k]
 &\le\sup_{\lambda\in\mathcal F_k\cap A_{\epsilon/2}}
       \mathbb{E}_\lambda^{(k)}[1-\psi_k]
 \le e^{-c_3k}<e^{-Ck},\\
 \Pi\{\lambda:p_\lambda\in\mathcal P_{k,2}\}
 &=\Pi(\mathcal F_k^c)\le e^{-c_1k}<e^{-Ck}.
\end{align*}

Define the induced prior on $\mathcal P$ by
\[
 \Pi_{\mathcal P}(B)
 :=\Pi\{\lambda:p_\lambda\in B\},
 \qquad B\in\mathcal B(\mathcal P).
\]
By C1 and the preceding bounds,
\citet[Theorem~6.17, final assertion, pp.~139--140]{Ghosal2017}
applies. Bayes' formula then gives
\[
 \Pi\{\lambda:d_H(\lambda,\lambda_0)\ge\epsilon\mid X^{(k)}\}
 =\Pi_{\mathcal P}(U_\epsilon^c\mid X^{(k)})
 \longrightarrow0
 \qquad\mathbb P_0^\infty\text{-a.s.}
\]

For $m\ge1$, put
\[
 E_m=\{\lim_{k\to\infty}\Pi(A_{1/m}\mid X^{(k)})=0\}.
\]
Then
\[
 \mathbb{P}_0^\infty(E_m)=1,\qquad
 \mathbb{P}_0^\infty\!\left(\bigcap_{m=1}^{\infty}E_m\right)=1.
\]
On this intersection, for every $\epsilon>0$ choose $m$ such that
$1/m<\epsilon$. Since $A_\epsilon\subseteq A_{1/m}$,
\[
 0\le\Pi(A_\epsilon\mid X^{(k)})
 \le\Pi(A_{1/m}\mid X^{(k)})\longrightarrow0.
\]
This proves strong consistency as defined in
Definition~\ref{def: consistency}.
\end{proof}

\end{document}

%% file: commands.tex
\newcommand{\bfx}{\mathbf{x}}
\newcommand{\bfy}{\mathbf{y}}

\newcommand{\bflambda}{\boldsymbol \lambda}

\newcommand{\calF}{{\cal F}}

\newcommand{\calI}{{\cal I}}

\newcommand{\calL}{{\cal L}}

\newcommand{{\Pa}}{\text{Pa}}
\newcommand{{\Phild}}{\text{child}}
\newcommand{{\sigmoid}}{\text{sigmoid}}

\newcommand{{\lchild}}{\mathrm{l}}
\newcommand{{\rchild}}{\mathrm{r}}